\author{Palash Dey$^\star$ and Sourav Medya$^\dagger$\\ \texttt{palash.dey@cse.iitkgp.ac.in, medya@cs.ucsb.edu}\\$^\star$Indian Institute of Technology, Kharagpur\\$^\dagger$University of California, Santa Barbara}
\definecolor{antiquebrass}{rgb}{0.8, 0.58, 0.46}
\newcommand{\nfrac}{\nicefrac}
\newcommand{\eps}{\ensuremath{\varepsilon}\xspace}
\renewcommand{\epsilon}{\eps}
\let\mydelta\delta
\renewcommand{\delta}{\ensuremath{\mydelta}\xspace}
\let\myalpha\alpha
\renewcommand{\alpha}{\ensuremath{\myalpha}\xspace}
\let\mystar\star
\renewcommand{\star}{\ensuremath{\mystar}}
\newcommand{\HL}{{\sc Hiding Leader}\xspace}
\newcommand{\SC}{{\sc Set Cover}\xspace}
\newcommand{\DKS}{{\sc  Densest $k$-Subgraph}\xspace}
\newcommand{\pr}{\ensuremath{\prime}}
\newcommand{\el}{\ensuremath{\ell}\xspace}
\newcommand{\NP}{\ensuremath{\mathsf{NP}}\xspace}
\newcommand{\Pb}{\ensuremath{\mathsf{P}}\xspace}
\newcommand{\NPC}{\ensuremath{\mathsf{NP}}-complete\xspace}
\newcommand{\YES}{{\sc{yes}}\xspace}
\newcommand{\NO}{{\sc{no}}\xspace}
\newcommand{\NB}{\ensuremath{\mathbb N}\xspace}
\newcommand{\CC}{\ensuremath{\mathcal C}\xspace}
\newcommand{\EE}{\ensuremath{\mathcal E}\xspace}
\newcommand{\FF}{\ensuremath{\mathcal F}\xspace}
\newcommand{\GG}{\ensuremath{\mathcal G}\xspace}
\newcommand{\HH}{\ensuremath{\mathcal H}\xspace}
\newcommand{\JJ}{\ensuremath{\mathcal J}\xspace}
\newcommand{\LL}{\ensuremath{\mathcal L}\xspace}
\newcommand{\NN}{\ensuremath{\mathcal N}\xspace}
\newcommand{\OO}{\ensuremath{\mathcal O}\xspace}
\renewcommand{\SS}{\ensuremath{\mathcal S}\xspace}
\newcommand{\UU}{\ensuremath{\mathcal U}\xspace}
\newcommand{\VV}{\ensuremath{\mathcal V}\xspace}
\newcommand{\WW}{\ensuremath{\mathcal W}\xspace}
\newcommand{\XX}{\ensuremath{\mathcal X}\xspace}
\newcommand{\YY}{\ensuremath{\mathcal Y}\xspace}
\newcommand{\ZZ}{\ensuremath{\mathcal Z}\xspace}
\newtheorem{theorem}{\bf Theorem}
\newtheorem{proposition}[theorem]{\bf Proposition}
\newtheorem{corollary}[theorem]{\bf Corollary}
\newtheorem{definition}[theorem]{\bf Definition}
\crefname{theorem}{theorem}{\bf Theorems}
\crefname{observation}{observation}{\bf Observations}
\crefname{lemma}{lemma}{\bf Lemmas}
\crefname{corollary}{corollary}{\bf Corollaries}
\crefname{proposition}{proposition}{\bf Propositions}
\crefname{definition}{definition}{\bf Definitions}
\crefname{claim}{claim}{\bf Claims}
\crefname{reductionrule}{reduction rule}{\bf Reduction rules}
\title{Covert Networks: How Hard is It to Hide?}
\begin{document}

\maketitle

\begin{abstract}
Covert networks are social networks that often consist of harmful users. Social Network Analysis (SNA) has played an important role in reducing criminal activities (e.g., counter terrorism) via detecting the influential users in such networks. There are various popular measures to quantify how influential or central any vertex is in a network. As expected, strategic and influential miscreants in covert networks would try to hide herself and her partners (called {\em leaders}) from being detected via these measures by introducing new edges.

Waniek et al.~\cite{DBLP:conf/atal/WaniekMRW17} show that the corresponding computational problem, called \HL, is \NPC for the degree and closeness centrality measures. We study the popular core centrality measure and show that the problem is \NPC even when the core centrality of every leader is only $3$. On the contrary, we prove that the problem becomes polynomial time solvable for the degree centrality measure if the degree of every leader is bounded above by any constant. We then focus on the optimization version of the problem and show that the \HL problem admits a $2$ factor approximation algorithm for the degree centrality measure. We complement it by proving that one cannot hope to have any $(2-\eps)$ factor approximation algorithm for any constant $\eps>0$ unless there is a $\nfrac{\eps}{2}$ factor polynomial time algorithm for the \DKS problem which would be considered a significant breakthrough. We empirically establish that our $2$ factor approximation algorithm frequently finds out a near optimal solution. On the contrary, for the core centrality measure, we show that the \HL problem does not admit any $(1-\alpha)\ln n$ factor approximation algorithm for any constant $\alpha\in(0,1)$ unless $\Pb=\NP$ even when the core centrality of every leader is only $3$. Hence, our work shows that, although classical complexity theoretic framework fails to shed any light on relative difficulty of \HL for different centrality measures, the problem is significantly ``harder'' for the core centrality measure than the degree centrality one.
\end{abstract}

\section{Introduction}

Social network analysis (SNA) has played a pivotal role in many applications in multi-agent systems and artificial intelligence~\cite{DBLP:conf/atal/SabaterS02,otte2002social,wang2007social,carrington2005models}. One of the most successful applications of SNA is in counter-terrorism via analyzing {\em covert networks}~\cite{chen2005coplink,ressler2006social,xu2005criminal,lu2010social}. Covert network loosely refers to network of criminals, terrorists, illegal activities, etc. Security personnel regularly use various SNA tools to understand criminal behavior, catch their leaders, and effectively dismantle such networks~\cite{eiselt2018destabilization,farooq2017covert,knoke2015emerging}.

{\em Centrality measure} is one of the most useful tools that SNA provides to analyze covert networks. It assigns scores to the vertices based on their relative \emph{influence or importance} in the network~\cite{bavelas1948mathematical}; depending on the centrality measure, higher scores may correspond to important vertices and important vertices are expected to be more central. One of the simplest and oldest such centrality measures is the {\em degree centrality} which ranks vertices according to their degree~\cite{shaw1954group}. Other important examples include {\em closeness centrality} and {\em betweenness centrality} that are measures based on shortest paths~\cite{beauchamp1965improved}. 
Another centrality measure is the {\em core centrality}~\cite{seidman1983network} which ranks the vertices based on their {\em core number}. Intuitively speaking, if a vertex has a high core number, then it is part of some dense cohesive community within the network. Formally, a {\em $k$-core} is an induced subgraph of the network where the minimum degree of the vertices is at least $k$. The core number of a vertex is the highest integer $k$ such that the vertex is part of some {\em $k$-core}. Therefore, the core centrality can be more revealing about the position of a node than its degree centrality---while degree centrality only concerns about the degree of a vertex, the core centrality elegantly takes into consideration the degrees of the neighbors as well as the vertex. These two measures are also related in the sense that the core centrality of any vertex is at most its degree centrality. Due to its sophisticated nature, the core centrality has been extensively used in the study of covert networks~\cite{morselli2007efficiency,shaikh2008network,memon2012identifying} as well as in other important tasks such as viral marketing and social engagement~\cite{kitsak2010identification,bhawalkar2015preventing} in social networks.


In this paper, our goal is to study the centrality measure based secrecy in covert networks. Indeed, understanding covert networks remains a challenging task mainly due to incompleteness and dynamic evolution of the data as well as the strategic nature of the users ~\cite{krebs2002mapping,sparrow1991application,sageman2004understanding,alimi2015dynamics,roberts2016monitoring,baker1993social}. Since the criminals often possess technical expertise~\cite{johnson2016new,stevenson2014change,crossley2012covert,calvey2017covert,atran2017challenges}, we are interested in the evolution of terrorist networks under a framework of strategic users~\cite{DBLP:conf/atal/WaniekMRW17}: \textit{How is the network designed to hide the central or influential users aka the leaders?}   

Waniek et al.~\cite{DBLP:conf/atal/WaniekMRW17} first propose the \HL problem which incorporates the viewpoint of the leaders of a criminal organization. It also explicitly models knowledge of the criminals about SNA tools that are used to detect them and thus help in dismantling their organization. Intuitively, the input in the \HL problem is a network with a subset of vertices marked as leaders. The goal is to add fewest edges to ensure that various SNA tools do not rank any leader high based on centrality measures thereby capturing the {\em efficiency vs secrecy dilemma} that the criminals are believed to possess~\cite{morselli2007efficiency,csermely2013structure,von2015organized}. Waniek et al. show promising results that the \HL problem is computationally intractable even for the simplest degree centrality measure.
%
%
%



\subsection{Contribution}

In this paper, we study the \HL problem for the core centrality measure and show that the degree centrality measure is much more computationally vulnerable than the core centrality measure although the \HL problem is \NPC for both of them. We reinforce our above claim further through extensive empirical evaluations. Our specific contribution in this paper are as follows.

\begin{itemize}[itemsep=.2cm]
 \item We show that the \HL problem for degree centrality is polynomial time solvable if the degree of every leader is bounded by some constant [\Cref{thm:deg_poly}].
 
 \item We present a $2$ factor approximation algorithm for the \HL problem for degree centrality which optimizes the number of edges added [\Cref{thm:2apx_opt_fol_deg}]. We complement this by proving that, if there exists a $(2-\eps)$ factor approximation algorithm for the above problem for any constant $0<\eps<1$, then there exists a $\nfrac{\eps}{2}$ factor approximation algorithm for the \DKS problem [\Cref{thm:2apx_hard}] which would be considered a substantial breakthrough. To the best of our knowledge, the state of the art algorithm for the \DKS problem achieves an approximation ratio of $\tilde{\OO}(n^{\nfrac{1}{4}})$ only~\cite{DBLP:conf/soda/BhaskaraCVGZ12}.
 
 \item For the core centrality measure, we show that the \HL problem is \NPC even if the core centrality of every leader is exactly $3$ [\Cref{thm:CHL_hard}]. We prove that our result is almost tight in the sense that the \HL problem is polynomial time solvable if the core centrality of every leader is at most $1$ [\Cref{thm:core_poly}]. Moreover, we also prove that there does not exist any $(1-\alpha)\ln n$ factor approximation algorithm for any constant $\alpha\in(0,1)$ which optimizes the number of edges that one needs to add even when the core centrality of every leader is $3$ [\Cref{cor:core_inapp}].
 
 \item We show that a construction of a network by Waniek et al.~\cite{DBLP:conf/atal/WaniekMRW17}, called ``captain network'' there, hides the leaders with respect to the core centrality measure also.
 
 \item We empirically evaluate our $2$-approximation algorithm for the degree centrality measure in synthetic networks. We observe that our algorithm almost always produces near optimal results in practice. In the experimental results, we also show the extent in which a leader can hide in the captain network with respect to core centrality.
\end{itemize}

\subsection{Related Work}
Waniek et al. first proposed and studied the \HL problem~\cite{DBLP:conf/atal/WaniekMRW17,waniek2018hiding}. They proved that the problem is \NPC for both the degree and closeness centrality measures. They also proposed a procedure to design a captain (covert) network from scratch which not only hides the leaders based on the degree, closeness, and betweenness centrality measures, but also keeps the influence of the leaders high in the network. In this paper, we provide two approximability results for degree centrality and core centrality respectively. We also show the problem is harder in the case of core centrality. Liu et al. \cite{liu2008towards} studied another related problem to make the degree of each node in the network beyond a given constant by adding minimal edges. 

Other problems that align with privacy issues in social networks were studied before \cite{Zheleva2009,altshuler2013stealing}. In \cite{Zheleva2009}, the authors showed how an adversary exploits online social networks to find the private information about users. Altshuler et al. \cite{altshuler2013stealing} discussed the threat of malware targeted at extracting information  in a real-world social network.

\textbf{Computing centrality and related problems. }A significant amount of related work study the computationally complexity of various centrality measures. Brandes~\cite{brandes2001} first proposed an efficient algorithm to compute the betweenness centrality of a vertex in a network. More recently, Riondato et al. \cite{riondato2014} introduced an approach to compute the top-$k$ vertices according to the betweenness centrality using VC-dimension theory. Yoshida \cite{yoshida2014} studied similar problems for both the betweenness and coverage centrality measures in a group setting. Mahmoody et al. subsequently improved the performance of the above algorithms using a novel sampling scheme~\cite{mahmoodye2016}. There is an active line of research to optimize the centrality of one node as well as of a set of nodes~\cite{crescenzi2015,ishakian2012framework,DANGELO2016153,medya2018group}. Nikos et al. proposed a novel procedure to maximize the expected decrease in shortest path distances from a given node to the remaining nodes via edge addition~\cite{parotsidis2016centrality}. Crescenzi et al.~\cite{crescenzi2015} proposed greedy algorithms to increase centrality of certain vertices and show effectiveness of their approach through extensive simulation. Kilberg~\cite{kilberg2012basic} and others studied behavioral models to understand why certain network topologies are common in covert networks~\cite{crossley2012covert,demiroz2012anatomy,belli2015exploring}. Enders and Su~\cite{enders2007rational} and others develop models to explain various properties like efficiency vs secrecy dilemma etc. of covert networks~\cite{janssen2012stable,lindelauf2009influence,duijn2014relative,enders2010network}. Other important direction includes quantifying the influence of vertices; most prominent among them include {\em Independent Cascade} model~\cite{goldenberg2001using}, {\em Linear Threshold} model~\cite{kempe2003maximizing}, Bass model~\cite{bass1969new,meade2006modelling}, etc.

\textbf{Other network design problems. }
We also provide a few details about previous work on other network modification (design) problems. A set of design problems were introduced in \cite{paik1995}. Lin et al.~\cite{lin2015} addressed a shortest path optimization problem via improving edge weights on undirected graphs. The node version of this problem was also studied \cite{dilkina2011,medya2018noticeable,medya2018making}. Meyerson et al.~\cite{meyerson2009} proposed approximation algorithms for single-source and all-pair shortest paths minimization. Faster algorithms for some of these problems were also presented in~\cite{papagelis2011,parotisidis2015selecting}.
Demaine et al.~\cite{demaine2010} minimized the diameter of a network by adding shortcut edges. Dey et al.~\cite{DBLP:conf/comad/DeyKN19} studied the social network effect in the surprise in elections.

\section{Preliminaries}

For a positive integer \el, we denote the set $\{1,2,\ldots,\el\}$ by $[\el]$. A network or graph $\GG=(\VV,\EE)$ is a tuple consisting of a finite set $\VV$ (or $\VV[\GG]$) of $n$ vertices and a set $\EE\subseteq\VV\times\VV$ of edges (also denoted by $\EE[\GG]$). A network is called {\em undirected} if we have $(x,y)\in\EE$ whenever we have $(y,x)\in\EE$ for any $x,y\in\VV$ with $x\ne y$. A {\em self loop} is an edge of the form $(x,x)$ for some $x\in\VV$. In this paper, we focus on undirected networks without any self loop. The {\em degree} of a vertex $x$ is the number of edges incident on it which is $|\{e\in\EE: x\in e\}|$. A {\em subgraph} of a network $\GG=(\VV,\EE)$ is a network $\HH=(\UU,\FF)$ such that $\UU\subseteq\VV$ and $\FF\subseteq \EE\cap(\UU\times\UU)$. For a positive integer $k$, a subgraph \HH of a network \GG is called a {\em $k$-core} if the degree of every vertex in \HH at least $k$. The {\em core number} of a vertex $x$ in a network is the largest integer $k$ such that $x$ belongs to a $k$-core.

\subsection{Network Centrality}
Let \GG be any network. Bavelas~\cite{bavelas1948mathematical} introduces the notion of centrality of vertices. Intuitively, centrality measures try to capture the importance of a vertex in a network. Shaw~\cite{shaw1954group} proposes the {\em degree centrality} measure which has turned out to be one of the most useful measures. The degree centrality of a vertex $x$ in \GG is the degree $\text{deg}_\GG(x)$ of $x$ in the network, that is $|\{y\in\VV[\GG]:\{x,y\}\in\EE[\GG]\}|$. 

Seidman~\cite{seidman1983network} introduces the idea of {\em core centrality} which is particularly useful for finding network cohesion. For an integer $k$, a {\em $k$-core} is a subgraph \HH of \GG such that the degree of every vertex in \HH is at least $k$. The core number of a vertex $x$ in \GG is the largest $k$ such that $x$ belongs to a $k$-core, that is $\max\{k\in\NB:\exists \HH\subseteq\GG, \text{deg}_\HH(x)\ge k \forall x\in\VV[\HH]\}$. The core centrality of a vertex $x$ in \GG is its core number in the network. Other popular network centrality measures includes closeness centrality~\cite{beauchamp1965improved}, betweenness centrality~\cite{anthonisse1971rush,freeman1977set}, etc.
%
%

\subsection{Problem Definition}

Intuitively, the input in the \HL problem is a network with a subset of vertices marked as leaders (and the other vertices are followers), a budget $b$ which is the maximum number of edges that we can add in the network, and a target $d$ which is the minimum number of followers whose centrality must be at least as high as the centrality of any leader in the resulting network (after addition of the new edges). We now define our problem formally. In \Cref{probdef}, $c(\cdot,\cdot)$ denote either degree centrality or core centrality.\vspace{1ex}

 \noindent\fbox{\begin{minipage}{\linewidth}
\begin{definition}[\HL (HL)]\label{probdef}
Given a graph $\GG=(\VV,\EE)$, a subset $\LL\subseteq\VV$ of leader vertices, an integer $b$ denoting the maximum number of edges that we are allowed to add in \GG, an integer $d$ denoting the number of follower vertices in $\FF=\VV\setminus\LL$ whose final centrality should be at least as high as any leader, the goal is to compute if there exists a subset $\WW\subseteq\FF\times\FF$ of edges between followers such that the conditions below hold.
\begin{enumerate}[(i)]
 \item $|\WW|\le b$
 \item $\exists_{\FF^\prime \subseteq \FF} |\FF^\prime|\geq d $ such that 
 $$c(\GG^\prime,f) \geq c(\GG^\prime,l), \forall {f\in \FF^\prime,l\in \LL}$$
 where $\GG^\prime=(\VV,\EE\cup\WW)$.
\end{enumerate}
\end{definition}
\end{minipage}}

\section{Results for Degree Centrality}

We present our algorithmic and hardness results for the \HL problem for the degree centrality measure in this section. We begin with presenting our polynomial time algorithm for the \HL problem for the degree centrality measure when the degree of every leader in the network is bounded above by any constant. On a high level, our algorithm makes greedy choices as long as it can and uses local search technique when ``stuck.''

\begin{theorem}\label{thm:deg_poly}
 There exists a polynomial time algorithm for the \HL problem for degree centrality if the degree of every leader is bounded by any constant.
\end{theorem}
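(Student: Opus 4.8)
\noindent\textit{Proof sketch.} The plan is to exploit the single structural fact that makes the bounded-degree case tractable: adding edges only between followers never changes the degree of any leader. Hence the quantity $\theta := \max_{l \in \LL} \deg_\GG(l)$ is fixed throughout the process, and by hypothesis $\theta \le c$ for some constant $c$. Condition (ii) of \Cref{probdef} therefore reduces to a purely follower-side statement: after adding the new edge set $\WW$, at least $d$ followers must have degree at least $\theta$. So the whole problem becomes: \emph{what is the minimum number of new follower--follower edges needed so that at least $d$ followers have degree $\ge \theta$?}, and we answer \YES iff this minimum is at most $b$. As a first step I would compute $\theta$, let $d_0$ be the number of followers already having degree $\ge\theta$, and dispose of the trivial cases: if $d_0 \ge d$ output \YES with $\WW = \emptyset$; if the number of \emph{deficient} followers (those of degree $<\theta$) is smaller than $t := d - d_0$, output \NO.

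The core of the algorithm is greedy with a local-search fallback. Call a deficient follower's \emph{deficiency} the value $\theta$ minus its degree; note every deficiency lies in $\{1,\dots,\theta\}\subseteq\{1,\dots,c\}$. As long as there remain two still-to-be-promoted deficient followers that are non-adjacent in the current graph, I would add an edge between such a pair, giving priority to pairs that maximise the number of followers promoted per edge (in particular, pairing two deficiency-$1$ followers promotes both at the cost of one edge). Each added edge lowers the total residual deficiency by one or two, so after polynomially many steps the greedy phase terminates in one of two ways: either $t$ followers have been promoted, in which case we are done, or the set $R$ of still-to-be-promoted deficient followers forms a clique in the current graph. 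The key point is that every vertex of $R$ is deficient, hence has degree $<\theta\le c$, yet is adjacent to all other $|R|-1$ vertices of $R$; therefore $|R| \le c$, a constant. For this constant-size residual I would run an exhaustive search: over which vertices of $R$ to finish promoting, how many further edges to attach to each (at most its deficiency), which of those edges stay inside $R$ versus go to a ``sink'' follower outside $R$ (any already-satisfied or already-promoted follower is a safe sink, since removing or adding edges at it cannot push it below $\theta$), together with a bounded number of swaps that replace a previously added greedy edge by a different one; the search space here has size bounded by a function of $c$ alone and is hence constant. We keep the cheapest feasible completion and compare its total edge count with $b$.

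The main obstacle is the correctness proof, i.e.\ an exchange argument showing that the greedy pairings are never suboptimal and that the bounded local search genuinely recovers an optimal solution. Concretely I would take an optimal edge set $\WW^\star$, normalise it (every edge incident to a promoted follower, no redundant edges, degrees in $\WW^\star$ as small as the satisfaction constraints permit), and then transform it one edge at a time, without increasing $|\WW^\star|$, into a solution consistent with the greedy pairings; the only situation in which the naive pairing step can fail is precisely when the remaining promotable deficient followers are mutually adjacent, which is exactly the constant-size configuration disposed of by brute force. A secondary but necessary ingredient is a feasibility check, verifying that enough sink followers exist to absorb the leftover deficiencies, so that the algorithm correctly returns \NO on instances admitting no valid $\WW$ at all. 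Once these are established, every step runs in polynomial time (the constant $c$ entering only through a multiplicative constant), which proves \Cref{thm:deg_poly}.
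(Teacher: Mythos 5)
Your plan matches the paper's proof in all essentials: reduce the problem to raising $d-d_0$ deficient followers up to the fixed leader threshold $k$, greedily add edges between non-adjacent deficient followers, and observe that once the greedy step is blocked the remaining deficient followers form a clique of size at most the constant degree bound, which is then finished off by local rewiring. The paper's Cases 3--4 make your ``swaps'' concrete---delete one previously added edge $\{u,v\}$ and add the two edges $\{x,u\},\{x,v\}$ (or $\{a,u\},\{b,v\}$), so every net added edge still lowers the total deficiency by two---which makes the algorithm meet the lower bound $\lceil\sum_{x}(k-\deg(x))/2\rceil$ exactly and thereby supplies the exchange argument you defer.
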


\begin{proof}
 Let \GG be the input graph and $k$ the highest degree of any leader; that is $k=\max\{\text{deg}_\GG(l): l\in L\}$. We are given that $k$ is a constant. If the number of followers is at most $2k$, then there are at most ${2k\choose 2}$ (which is a constant) new edges that we can add and we try all possible subsets of it of cardinality at most $b$. The number of such subsets is at most $2^{{2k\choose 2}}$ which is a constant and thus we can output correctly in polynomial time. So let us assume that the number of followers is at least $2k+1$. Similarly we can also assume without loss of generality that the budget $b$ is at least $4k^2$ since otherwise we will try to add all possible $b$ new edges (there are only $\OO(n^{2b})=n^{\OO(1)}$ possibilities since $k=\OO(1)$) and thus we can output correctly in polynomial time.
 
 Suppose there are already $d^\pr$ number of followers in \GG whose degrees are at least $k$. If $d^\pr\ge d$, we output \YES. Otherwise let us assume without loss of generality that $d^\pr<k$. Let $\XX\subseteq \FF$ with $|\XX|=d-d^\pr$ be the set of top $d-d^\pr$ highest degree followers in the network whose degrees are less than $k$. Intuitively, our algorithm greedily adds new edges between two vertices in \XX whose degrees are less than $k$ until it is stuck and removes some edge it had added before to make progress. Concretely, our algorithm works as follows. To distinguish existing (old) edges from newly added edges (by the algorithm) in \GG, we color the existing (old) edges as red and whenever we add a new edge, we color it green. To begin with, all the edges in \GG are colored red and there is no green edge. We apply the following step (\star) as long as we can. If the number of green edges in \GG is less than $b$ and there exist two vertices $x,y\in\XX$ such that the degrees of both the vertices are less than $k$ and there is no edge between them, then we add an edge $\{x,y\}$ in \GG and color it green. Such a pair of vertices (if exists) can be found in $\OO(n^2)$. If such a pair of vertices does not exist in \GG, then one of the following four cases must hold.
 
 {\em Case $1$: The degree of every vertex in \XX is at least $k$.} In this case, we output \YES. 
 
 {\em Case $2$: The number of green edges in \GG is $b$.} In this case, we output \YES if the degree of every vertex in \XX is at least $k$; otherwise we output \NO. 
 
 {\em Case $3$: There exists exactly one vertex $x\in\XX$ with degree less than $k$.}  If the degree of $x$ is $k-1$, then we add an edge between $x$ and any vertex $y\in L$ such that there is no edge between $x$ and $y$ in \GG and color it green. If the number of green edges in \GG is at most $b$, then we output \YES; otherwise we output \NO. Otherwise we assume that the degree of $x$ is less than $k-1$. If the number of green edges in \GG is at most $3k^2$, then we can add $k$ new green edges on $x$ and answer \YES since we have already assumed that $b\ge 4k^2$. So let us assume without loss of generality that the number of green edges in \GG is more than $3k^2$. Let $\{u,v\}$ be a green edge such that there is no edge between $x$ and $u$ and between $x$ and $v$ in \GG. Such a green edge $\{u,v\}$ always exists in \GG since the degree of $x$ is less than $k-1$ in \GG and there are more than $3k^2$ green edges in \GG. Moreover such an edge can be found in polynomial time by simply checking all the green edges. We now remove the green edge $\{u,v\}$ from \GG, add two edges $\{x,u\}$ and $\{x,v\}$ in \GG, color both of them green, and continue (return to the step (\star)).
 
 {\em Case $4$: For every pair of vertices $x,y\in\XX, x\ne y$ with degree less than $k$ for both the vertices, there is an edge between them.} Let $\ZZ\subseteq\XX$ be the set of vertices in \XX with degree less than $k$. Since there exists an edge between every pair of vertices in \ZZ in this case and the degree of every vertex in \ZZ is less than $k$, we have $|\ZZ|\le k$. If the number of green edges in \GG is at most $3k^2$, then we can add $k$ new green edges on every vertex in \ZZ and answer \YES since we have $|\ZZ|\le k$ and $b\ge 4k^2$. So let us assume that the number of green edges in \GG is more than $3k^2$. Let $a,b\in\XX$ be any two vertices. Let $\NN(a)$ and $\NN(b)$ denote the set of neighbors of $a$ and $b$ in \XX. Since the degrees of both $a$ and $b$ are less than $k$, we have $|\NN(a)|<k$ and $|\NN(b)|<k$. Since the degree of every vertex in \XX is at most $k$ and the number of green edges in \GG is at least $3k^2$, there exists at least one green edge $\{u,v\}$ in \GG which does not incident on any vertex in $\NN(a)\cup\NN(b)\cup\{a,b\}$ (there can be at most $2k^2$ green edges incident on any vertex in this set). Moreover such an edge can be found in polynomial time by simply checking all the green edges. We now remove the green edge $\{u,v\}$ from \GG, add two edges $\{a,u\}$ and $\{b,v\}$ in \GG, color both of them green, and continue (return to the step (\star)).
 
 The algorithm always terminates in polynomial time since in every iteration, it adds a green edge and at most $b$ $(<n^2)$ green edges could be added -- in cases $3$ and $4$, we have added two green edges and removed only one green edge; the algorithm terminates in cases $1$ and $2$. Also, whenever the algorithm outputs \YES, adding green edges to the graph makes the degree of at least $d$ followers in the network at least $k$. Hence, if the algorithm outputs \YES, the instance is indeed a \YES instance. So, let us assume that the algorithm outputs \NO. Except (in case $3$) when there exists exactly one vertex $x$ in the network with degree less than $k$ and the degree of $x$ is $k-1$, whenever we add one green edge in total (which is the same as adding two green edges and removing one green edge in cases $3$ and $4$), the sum of the degrees of all the vertices in \XX increases by $2$. Hence the number of green edges added in the graph is at most $\text{ALG}=\lceil\sum_{x\in\XX}\nfrac{(k-deg(x))}{2}\rceil$. Since the algorithm outputs \NO, we have $\text{ALG}>b$. We observe that since any edge increases the degree of at most $2$ vertices, when the algorithm outputs \NO, the instance is indeed a \NO instance. Hence the algorithm is correct.
\end{proof}

We now present a simple $2$ factor polynomial time approximation algorithm for the \HL problem for degree centrality.

\begin{theorem}\label{thm:2apx_opt_fol_deg}
 There exists a polynomial time algorithm (HLDA) for approximating the budget $b$ in \HL within a factor of $2$ for degree centrality.
\end{theorem}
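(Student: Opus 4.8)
The plan is to turn the budget-approximation into a comparison between one arithmetic quantity --- the total ``degree deficit'' of a cleverly chosen set of followers --- and a matching lower bound. First I would observe that, since $\WW$ may contain only follower--follower edges, no leader's degree ever changes, so the threshold that the targeted followers must reach is the fixed number $k=\max_{l\in\LL}\text{deg}_\GG(l)$; thus \HL for degree centrality is exactly the task of making at least $d$ followers have degree at least $k$. Let $d^\pr$ be the number of followers already having degree at least $k$ in \GG. If $d^\pr\ge d$, the empty edge set is optimal and we are done; otherwise put $t=d-d^\pr$. If fewer than $t$ followers have degree below $k$, there are fewer than $d$ followers in total, so the instance is infeasible for every $b$ (detectable in polynomial time); otherwise let $\XX\subseteq\FF$ be a set of $t$ followers of degree below $k$ with the \emph{largest} degrees, write $\delta_f=k-\text{deg}_\GG(f)\ge 1$ for $f\in\XX$, and set $s=\sum_{f\in\XX}\delta_f$.

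The algorithm HLDA just raises every vertex of \XX to degree $k$ as economically as it can by the following greedy rule: while some $f\in\XX$ still has degree below $k$, if there are two vertices $x,y\in\XX$ that are both still below degree $k$ and currently non-adjacent then add the edge $\{x,y\}$, and otherwise add an edge from some below-$k$ vertex of \XX to a follower that is not currently its neighbour. Every iteration strictly increases $\sum_{f\in\XX}\text{deg}(f)$, so HLDA terminates; since every edge it adds is incident to \XX and it never adds an edge at a vertex of \XX after that vertex reaches degree $k$, the number of edges added is at most $\sum_{f\in\XX}\delta_f=s$ (a double counting step: the sum of new incident edges over \XX equals ``within-\XX edges'' plus the total, hence the total is at most that sum, which is at most $s$). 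A short counting argument using that there are enough followers to connect to shows each step of the rule is always available on a feasible instance.

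The core of the argument is a matching lower bound on the optimum, and this is the part I expect to be delicate. Let $\WW^\star$ be an optimal solution, $\GG^\star=(\VV,\EE\cup\WW^\star)$, and let $\FF^\star\subseteq\FF$ with $|\FF^\star|\ge d$ be followers all having degree at least $k$ in $\GG^\star$. Because adding edges never decreases any degree, we may assume without loss of generality that $\FF^\star$ contains every follower that already had degree at least $k$ in \GG; consequently $\FF^\star$ contains at least $t$ followers whose degree in \GG was below $k$, and we fix such a subset \YY with $|\YY|=t$. For every $g\in\YY$ the number of edges of $\WW^\star$ incident to $g$ is $\text{deg}_{\GG^\star}(g)-\text{deg}_\GG(g)\ge k-\text{deg}_\GG(g)$. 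Summing over \YY and using that \XX was chosen to \emph{minimise} $\sum\bigl(k-\text{deg}_\GG(\cdot)\bigr)$ over all $t$-element subsets of followers of degree below $k$, we obtain $\sum_{g\in\YY}\bigl(k-\text{deg}_\GG(g)\bigr)\ge s$. Since each edge of $\WW^\star$ is incident to at most two vertices of \YY, it follows that $2\,|\WW^\star|\ge s$, i.e.\ $\mathrm{OPT}\ge\lceil s/2\rceil$.

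Combining the two bounds, HLDA adds at most $s\le 2\lceil s/2\rceil\le 2\cdot\mathrm{OPT}$ edges, and every step runs in polynomial time, which is the claimed $2$-factor approximation. The hard part is the lower bound rather than the algorithm: the inequality must hold against \emph{every} optimal solution, not merely one that happens to target \XX, and it is precisely the choice of \XX as a smallest-deficit set --- together with the reduction that the witness set may be assumed to contain all already-satisfied followers --- that makes this go through. The only other point needing care is verifying that each greedy edge-addition in HLDA is actually available, which is the same ``enough followers'' bookkeeping used in \Cref{thm:deg_poly}.
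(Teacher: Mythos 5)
Your proposal is correct and follows essentially the same route as the paper's proof: pick the $d-d^\pr$ highest-degree deficient followers, greedily saturate them, and lower-bound the optimum by half the total degree deficit using the fact that each added follower--follower edge raises the sum of follower degrees by at most $2$ and that the chosen set minimises that deficit. Your write-up is in fact somewhat more careful than the paper's (which leaves the ``witness set contains all already-satisfied followers'' step and the choice of threshold $k$ implicit), but there is no substantive difference in the argument.
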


\begin{proof}
 Let $\FF^\pr\subseteq \FF$ be the set of followers in $\FF$ whose degree centrality is at least the degree centrality of every vertex in $\LL$. Let $|\FF^\pr|=d^\pr$. If $d^\pr\ge d$, then we output an empty set of edges. Let $x_i\in \FF, i\in[d-d^\pr]$ be the $(d-d^\pr)$ followers with highest degree centrality among the vertices in $\FF\setminus \FF^\pr$. We keep on adding edges with at least one end point in $\{x_i: i\in[d-d^\pr]\}$ until the degree centrality of every $x_i, i\in[d-d^\pr]$ is at least the degree of every vertex in $\LL$ in the resulting graph. When we have $d$ followers with degree at least the degree of every vertex in \LL, we output the set of edges that we have added. The algorithm adds at most $\sum_{i=1}^{d-d^\pr} (d-\text{deg}(x_i))$ many edges where where deg($x_i$) is the degree of the vertex $x_i$ in the input graph. Since any new edge can increase the sum of the degrees of the followers by at most $2$, we have OPT$\ge \sum_{i=1}^{d-d^\pr} \nfrac{(d-\text{deg}(x_i))}{2}\ge\nfrac{\text{ALG}}{2}$ by the choice of $\FF^\pr$. Hence our algorithm approximates $b$ by a factor of $2$.
\end{proof}

%
%
%
%


We now complement our approximation algorithm in \Cref{thm:2apx_opt_fol_deg} by proving that if there exists a polynomial time approximation algorithm for the \HL problem for degree centrality with approximation factor $(2-\eps)$ for any constant $\eps>0$, then there exists a constant factor polynomial time approximation algorithm for the \DKS problem. In the \DKS problem, the input is a graph \GG and an integer $k$ and we need to find a subgraph \HH of \GG on $k$ vertices with highest density. The density of a graph on $n$ vertices is the number of edges in it divided by ${n\choose 2}$. To the best of our knowledge, we do not know whether there exists any polynomial time algorithm which can distinguish a graph containing a clique of size $k$ from a graph where the density of every sub-graph of size $k$ is at most $(\nfrac{\eps}{2})$ (any $\nfrac{\eps}{2}$ factor approximation algorithm for the \DKS problem would be able to distinguish). In fact, none of the known algorithms can distinguish even for some sub-constant values for $\eps$ (see \cite{DBLP:conf/soda/BravermanKRW17} and references therein). We now show that if there exists a $(2-\eps)$ factor approximation algorithm for the \HL problem for any constant $0<\eps<1$, then there exists an $\nfrac{\eps}{2}$ factor approximation algorithm with the same running time (of the \HL algorithm).

\begin{theorem}\label{thm:2apx_hard}
 Suppose there exists a $(2-\eps)$ factor polynomial time approximation algorithm for the \HL problem for degree centrality for some constant \eps. Then there exists a polynomial time algorithm for distinguishing a graph containing a clique of size $k$ from a graph where the density of every sub-graph of size $k$ is at most $\nfrac{\eps}{2}$.
\end{theorem}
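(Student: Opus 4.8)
The plan is to reduce the promise problem in the theorem to the optimization version of \HL for degree centrality: given a graph \GG and an integer $k$, I would build an \HL instance $H$ whose optimal number of added edges is $\binom{k}{2}$ when \GG contains a $k$-clique and at least $\big(2-\nfrac{\eps}{2}\big)\binom{k}{2}$ when every $k$-vertex subgraph of \GG has density at most $\nfrac{\eps}{2}$; since $(2-\eps)\binom{k}{2}<\big(2-\nfrac{\eps}{2}\big)\binom{k}{2}$, running the hypothetical $(2-\eps)$ factor algorithm on $H$ and thresholding its output at $(2-\eps)\binom{k}{2}$ decides the promise problem in polynomial time. Before the reduction I would normalize: the cases $k\le 2$ are trivial, and by padding \GG with isolated vertices I may assume $n=|\VV[\GG]|\ge 2\binom{k}{2}$. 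Padding is safe because it creates no $k$-clique, and because if every $k$-subset of \GG spans at most $\nfrac{\eps}{2}\binom{k}{2}$ edges then so does every $j$-subset with $j\le k$ (extend it to a $k$-subset, which only adds edges), hence so does every $k$-subset of the padded graph.

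For the construction I would take the vertices of \GG as ``main'' followers, joining two main followers $u,v$ precisely when $\{u,v\}$ is a \emph{non-edge} of \GG, and then attach $\text{deg}_\GG(v)$ private pendant (degree-$1$) dummy followers to each main follower $v$ so that every main follower has degree exactly $n-1$ in $H$; I would add one leader $\el$ and give it degree $D:=n+k-2$ by joining it to $D$ private dummy followers, and set $d:=k$. The key structural observation, which drives both directions, is that among the main followers the non-edges of \GG are already present, so the only follower--follower edges the solver may \emph{add} inside a set $S$ of main followers are edges of $\GG[S]$; in particular the added edges lying inside any $k$-set of main followers form a subgraph of $\GG[S]$ and number at most $e(\GG[S])$.

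Given this, the two directions become short counting arguments. If $S$ is a $k$-clique of \GG, adding the $\binom{k}{2}$ edges of $\GG[S]$ lifts each vertex of $S$ from degree $n-1$ to $n-1+(k-1)=D$, so $\text{OPT}\le\binom{k}{2}$. In the sparse case I would show $\text{OPT}\ge\big(2-\nfrac{\eps}{2}\big)\binom{k}{2}$: take any feasible edge set \WW and let $T$, with $|T|\ge d=k$, be the followers that reach degree $D$. If $T$ contains a dummy, that dummy started at degree $1$, so $|\WW|\ge D-1=n+k-3\ge 2\binom{k}{2}$ by the padding. Otherwise $T$ consists of main followers; fixing any $k$ of them as $S$, the number $p$ of \WW-edges inside $S$ is at most $e(\GG[S])\le\nfrac{\eps}{2}\binom{k}{2}$, while the number $q$ of \WW-edges with exactly one endpoint in $S$ must satisfy $2p+q\ge k(k-1)=2\binom{k}{2}$ because each vertex of $S$ must gain at least $D-(n-1)=k-1$; hence $|\WW|\ge p+q\ge 2\binom{k}{2}-p\ge\big(2-\nfrac{\eps}{2}\big)\binom{k}{2}$. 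The threshold argument above then finishes the proof.

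I expect the main obstacles to be two parameter interactions rather than any single hard step. First, the complement gadget must simultaneously make the set of \emph{addable} main-follower edges equal to $\EE[\GG]$ and keep all main-follower degrees equal, so that the cost of reaching $D$ depends only on $e(\GG[S])$; the pendant dummies are what make this work. Second, one must pad \GG enough that promoting a cheap dummy is never competitive with promoting the intended main followers, which is exactly why $D$ (hence $n$) is forced to be at least $2\binom{k}{2}$; and the ``extend a $j$-set to a $k$-set'' remark is what keeps the density promise intact under this padding.
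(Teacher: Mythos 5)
Your proposal is correct and follows essentially the same route as the paper's proof: a complement-of-\GG gadget among the main followers, pendant dummies to equalize their degrees, a high-degree leader clique (you use a single leader, which changes nothing), $d=k$, and thresholding the approximation's output at $(2-\eps)\binom{k}{2}$, with the identical counting argument $2p+q\ge k(k-1)$, $p\le\nfrac{\eps}{2}\binom{k}{2}$ in the soundness case. The only substantive difference is that you pad \GG so that $n\ge 2\binom{k}{2}$ in order to rule out solutions that promote degree-$1$ dummy followers; the paper asserts that the auxiliary vertices are never part of an optimal solution without the quantitative check, so your version is actually the more careful one.
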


\begin{proof}
 Let \GG be any graph which satisfies either (exactly) one of the following properties.
 
 \begin{enumerate}[(i),itemsep=.2cm]
  \item {\bf Completeness:} There exists a clique of size $k$ in \GG.
  \item {\bf Soundness:} The density of any subgraph of \GG of size $k$ is at most $\nfrac{\eps}{2}$.
 \end{enumerate}
 
 From \GG we construct an instance of \HL. Intuitively, we introduce a vertex $a_v$ corresponding to every vertex $v\in\VV[\GG]$ in \GG and the edge set among those vertices is the complement of the corresponding edge set in \GG. To ensure that, in the resulting graph, the degree of every vertex $a_v$ for $v\in\VV[\GG]$ is $n$, we add appropriate number of edges between $a_v$ and some auxiliary vertices $d_{(v,\el)}$ for every $v\in\VV[\GG], \el\in[n]$; we ensure that the degree of any such auxiliary vertex is at most $1$ which will guarantee that these auxiliary vertices are never part of any optimal solution. Finally we add a clique on a set $\{x_i: i\in[n+k]\}$ of leader vertices so that the degree centrality of every leader is $n+k-1$. Formally, the instance $(\HH,\LL,d)$ of \HL is defined as follows.
 \begin{eqnarray*}
  \VV[\HH] &=& \{a_v, d_{(v,\el)}: v\in\VV[\GG], \el\in[n]\} \cup \{x_i: i\in[n+k]\}\\
  \LL &=& \{x_i: i\in[n+k]\}\\
  \EE[\HH] &=& \{\{a_u,a_v\}: (u,v)\notin\EE[\GG]\}\\
  && \cup \{\{x_i,x_j\}:1\le i<j\le n+k\}\\
  && \cup \{\{a_v,d_{(v,\el)}\}: v\in\VV[\GG], \el\in[\text{deg}_\GG(v)+1]\}\\
  d &=& k
 \end{eqnarray*}
 
 We now use the $(2-\eps)$ factor approximation algorithm for the \HL problem for degree centrality which outputs that there is a way to add $b_{ALG}$ number of edges so that there exist at least $d$ followers in \HH whose degree is at least the degree of any leader. Let $b_{OPT}$ denotes the minimum number of edges that one needs to add to ensure that there exist at least $d$ followers in \HH whose degrees are at least the degree of every leader. Then we have $b_{ALG}\le (2-\eps)b_{OPT}$. We output that the graph \GG contains a $k$-clique if $b_{ALG}\le(2-\eps) {k\choose 2}$. Otherwise, we output that the density of any subgraph of \GG on $k$ vertices is at most $\nfrac{\eps}{2}$. We now prove correctness of our algorithm. We first observe that the degree of every leader in \HH is $n+k-1$, the degree of $a_v\in\VV[\HH]$ for every $v\in\VV[\GG]$ is $n$, and the degree of every other vertex is at most $1$.
 
 \begin{enumerate}[(i),itemsep=.2cm]
  \item {\bf Completeness:} Let $\WW\subseteq\VV[\GG]$ with $|\WW|=k$ be a clique in \GG. Let us consider the subset $\XX=\{a_v: v\in\WW\}\subseteq \VV[\HH]\setminus\LL$. By construction, \XX forms an independent set in \HH and the degree of every vertex in \HH is $n$. Since, adding all the edges in $\{\{a_u,a_v\}: u,v\in\WW, u\ne v\}$ in \HH makes the degree of every vertex in \XX in the resulting graph $n+k-1$, we have $b_{OPT}\le {k\choose 2}$. Hence, we have $b_{ALG} \le (2-\eps)b_{OPT}\le (2-\eps) {k\choose 2}$.

  \item {\bf Soundness:} In this case, the density of any subgraph of \GG on $k$ vertices is at most $\nfrac{\eps}{2}$. Let $\YY\subseteq \VV[\HH]\setminus\LL$ with $|\YY|=k$ be a set of any $k$ followers. By the construction of \HH, we have $|\EE[\YY[\HH]]|\ge (1-(\nfrac{\eps}{2})){k\choose 2}$. Hence, the minimum number of edges one needs to add to make the degree of every vertex in \YY at least $n+k-1$ is at least $k(k-1)-(\nfrac{\eps}{2}){k\choose 2} = (2-(\nfrac{\eps}{2})){k\choose 2}$. In particular, we have $b_{ALG}\ge b_{OPT}\ge (2-(\nfrac{\eps}{2})){k\choose 2} > (2-\eps) {k\choose 2}$.
 \end{enumerate}
 
 This concludes the proof of the statement.
\end{proof}

\section{Results for Core Centrality}

We present our results for the \HL problem for the core centrality measure in this section. Unlike in degree centrality case, the problem becomes \NP-complete even when the core centrality of every leader is only $3$. This is almost tight as we prove that the problem is polynomial time solvable if the core centrality of every leader is at most $1$.

In \Cref{thm:CHL_hard} below, we prove that the \HL problem is \NPC even when the core centrality of every leader is $3$. We reduce the \SC problem to the \HL problem there. In the \SC problem, the input is a universe $\UU=\{ u_{1},u_{2},...,u_{n} \}$, a collection $\SS=\{S_{1},S_{2},...,S_{m}\}$ of subsets of \UU, and an integer $t$ and we need to compute if there exist at most $t$ sets in \SS, union of which results in \UU. It is well known that the \SC problem is \NPC~\cite{garey1979computers}.

\begin{theorem} \label{thm:CHL_hard}
 The \HL problem for the core centrality measure is \NPC even when the core centrality of every leader is $3$.
\end{theorem}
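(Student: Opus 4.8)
The plan is to establish the two directions of \NPC separately. Membership in \NP is immediate: given the candidate set $\WW$ of added edges, one computes the core number of every vertex of $\GG^\prime=(\VV,\EE\cup\WW)$ in polynomial time (repeatedly delete a vertex of minimum degree), and hence verifies conditions (i) and (ii) of \Cref{probdef} in polynomial time. For hardness I would reduce from \SC.

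Given a \SC instance $(\UU=\{u_1,\dots,u_n\},\SS=\{S_1,\dots,S_m\},t)$, I would build an \HL instance as follows. First, put the leaders into a \emph{rigid} gadget whose core centrality cannot be moved by the kind of edges the reduction allows: since new edges are only between followers, it suffices to let $\LL$ induce (a disjoint union of) $K_4$'s, so that every leader has core number exactly $3$ and no follower-follower edge can change this or create a $4$-core among leaders; this is exactly the promised ``core centrality of every leader is $3$''. Next, for each element $u_i$ create an \emph{element follower} $z_i$ whose core number is strictly below $3$ in the constructed graph, and for each set $S_j$ create a \emph{set gadget}: a small block of followers with pre-installed edges chosen so that (a) the gadget by itself contains no $3$-core, and (b) adding a fixed number $c$ of new follower-follower edges ``activates'' it, i.e.\ turns the gadget together with all of $\{z_i : u_i\in S_j\}$ into one $3$-core, thereby lifting all those $z_i$ to core number $3$ simultaneously. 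Finally I would take the budget $b=c\cdot t$ (plus whatever fixed per-element wiring the construction charges) and the target $d$ to be the number of element followers plus the unavoidable number of gadget followers that an honest activation creates, calibrated so that $d$ is reachable only by a solution that is essentially a set cover of size $t$.

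For correctness, the forward direction is routine: from a cover $S_{j_1},\dots,S_{j_r}$ with $r\le t$, activate those $r$ gadgets (padding arbitrarily to exactly $t$ if $r<t$), which raises every $z_i$ to core number $3$ using at most $b$ edges, so the \HL instance is a \YES instance. The reverse direction is the substance of the proof: starting from any feasible $\WW$ with $|\WW|\le b$, I would first argue a \emph{normal form} --- an optimal solution never spends edges except to fully activate some set gadgets --- via an exchange argument together with the structural fact that, by construction, \emph{no follower lies in a $3$-core before any edge is added}, so every newly core-$3$ follower must be ``anchored'' at a gadget on which the full activation cost $c$ has been paid. Counting how many followers a normalized solution of cost $\le b$ can push to core number $3$ then shows that this count reaches $d$ only if the activated gadgets cover \UU, yielding a cover of size $\le t$.

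The main obstacle is designing the element/set gadgets so that this counting is actually valid, because three requirements are in tension: the pre-installed edges of a set gadget must put it within exactly $c$ edges of a $3$-core that also absorbs its element followers; yet the union of all gadgets --- which are glued together along shared element followers, since an element lies in several sets --- must still contain no $3$-core, or else some $z_i$ is already core-$3$ and the instance trivializes; and since completing a near-$3$-core is extremely edge-efficient, the gadgets and the value of $d$ must be chosen so that ``over-activating'' many gadgets cannot reach $d$ more cheaply than honestly covering \UU. I expect that making all three hold will force the reduction to start from a restricted form of \SC (for instance one in which each element occurs in a bounded number of sets, so that element followers cannot accumulate enough incidences to form a spurious $3$-core) and to account for the activation-created gadget vertices exactly in $d$. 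Once the gadget is pinned down, the same construction --- made approximation-preserving by keeping $b$ a fixed constant times the cover size --- also gives the claimed inapproximability for the core centrality measure.
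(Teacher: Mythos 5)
Your overall strategy---reduce from \SC, make the leaders rigid inside $K_4$'s, give each set a follower gadget that is ``activated'' by a fixed number of new edges and thereby lifts the element followers it covers into a $3$-core, set $b$ to (activation cost)$\times t$, and calibrate $d$ so that reaching it forces a cover---is exactly the shape of the paper's reduction. But you have not actually produced the proof: you explicitly leave the set gadget undesigned, identify its design as ``the main obstacle,'' and even hedge that the reduction may only work from a bounded-occurrence restriction of \SC. Since the gadget \emph{is} the entire content of the hardness argument, this is a genuine gap, not a routine detail.

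The missing idea is the following concrete gadget, which resolves all three tensions you list. For each set $S_i$ take $n$ vertices $X_{i,1},\dots,X_{i,n}$ joined in a \emph{path} $X_{i,2}-X_{i,3}-\cdots-X_{i,n}-X_{i,1}$, i.e.\ a cycle with exactly one edge $(X_{i,1},X_{i,2})$ deleted; give each $X_{i,j}$ exactly one further neighbour, namely $Z_{j,1}$ if $u_j\in S_i$ and a per-set hub $W_{i,1}$ otherwise, where $Z_{j,1}$ (resp.\ $W_{i,1}$) hangs by two edges off a $K_4$ of leaders (resp.\ of followers). Then every internal path vertex has degree exactly $3$ and the two path endpoints have degree $2$, so \emph{no} $X_{i,j}$ lies in a $3$-core until an edge is spent on that set's path; adding the single closing edge turns the whole cycle plus its pendant hubs into a $3$-core at activation cost $c=1$, so $b=t$ suffices for the forward direction. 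This is also why your worries dissolve without restricting \SC: an element occurring in many sets cannot acquire a spurious $3$-core because each of $Z_{j,1}$'s neighbours $X_{i,j}$ needs its entire cycle completed before it contributes, and over-activation is impossible because each activation costs at least one of the $t$ available edges. Finally, $d=4m+n(t+1)+t$ accounts exactly for the $4m$ follower $K_4$ vertices that are \emph{already} in a $3$-core (so your proposed invariant that no follower is initially core-$3$ is not what the paper uses---it instead counts those vertices into $d$), plus the $t(n+1)$ vertices of the activated gadgets, plus the $n$ element followers. Without this (or an equivalent) construction and the degree-deficiency counting it enables in the reverse direction, your argument remains a plan rather than a proof.
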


\begin{proof}
 The \HL problem for the core centrality measure is clearly in \NP. Note that computing core centrality of a node takes polynomial time \cite{bhawalkar2015preventing}. To prove \NP-hardness, we reduce from the \SC problem. Let $(\UU=\{ u_{1},u_{2},...,u_{n} \},\SS=\{S_{1},S_{2},...,S_{m}\}, t)$ be an instance of the \SC problem. To define a corresponding \HL problem instance, we construct the graph $\GG$ as follows. 
 
 Intuitively, for each subset $S_i\in \SS$, we create a path of $n$ vertices $X_{i,1}, X_{i,2},\cdots, X_{i,n}$ in $\GG$; $ (X_{i,2},X_{i,3}),\cdots,(X_{i,n-1},X_{i,n}), (X_{i,n},X_{i,1})$ are the edges of the above path. We also add $5$ vertices $W_{i,1}$ to $W_{i,5}$ with eight edges where the four vertices in $\{W_{i,\el}:2\le\el\le5\}$ form a clique with six edges; the other two edges are $(W_{i,1},W_{i,2})$ and $(W_{i,1},W_{i,5})$. For each $u_j \in \UU$, we add a set of $5$ vertices $\{Z_{j,\el}:1\le\el\le5\}$ with eight edges where the four vertices (leaders) $\{Z_{i,\el}:2\le\el\le5\}$ form a clique with six edges; the other two edges are $(Z_{j,1},Z_{j,2})$ and $(Z_{j,1},Z_{j,5})$. We also have an edge $(X_{i,j},Z_{j,1})$ for every $u_j\in S_i$. We allow to add $t$ new edges and demand that the core centrality of at least $4m+n(t+1)+t$ followers should be at least as high as the core centrality of every leader. \Cref{fig:hardness_ex1} illustrates the structure of our construction for sets $S_1=\{u_1,u_2\},S_2=\{u_2\},S_3=\{u_3,u_4\}$. We now formally describe our \HL instance.
 \begin{align*}
     V[\GG] &= \{X_{i,j} : S_i\in \SS, u_j\in \UU\} \cup V_1 \\
      \EE[\GG] &= E_1\cup E_2\cup E_3 \cup E_4 \cup E_5\cup E_6\cup E_7\cup E_8\\
      V_1 &= \{W_{i,p}: S_i\in \SS, p\in[5]\} \cup \{Z_{j,p} : u_j\in \UU,  p\in[5]\} \\
      E_1 &= \{(X_{i,j},X_{i,j+1}) : j\in [n-1]\setminus\{1\}, i\in [m]\} \\
       E_2 &= \{(X_{i,n},X_{i,1}) : i\in [m]\}   \\
      E_3 &= \{(W_{i,1}, W_{i,p})|i\in [m], p=2,5\} \\
      E_4 &=  \{(W_{i,1}, X_{i,j})|u_j\notin S_i, i \in [m], j\in [n]\} \\
      E_5 &= \{(Z_{j,1}, Z_{j,p})|j\in [n], p=2,5\}\\
      E_6 &= \{(X_{i,j}, Z_{j,1})|u_j\in S_i, i \in [m], j\in [n]\} \\
      E_7 &= \{(W_{i,p}, W_{i,q})|i\in [m], 2\leq p<q\leq 5\}\\
      E_8 &= \{(Z_{j,p}, Z_{j,q})|j\in [n], 2\leq p<q\leq 5\}\\
      \LL &= \{Z_{j,p}| j\in [n], p=2,3,4,5\} \\
      b &= t, d = 4m+n(t+1)+t
 \end{align*}

 We now claim that these two instances are equivalent. In one direction, let us assume that the \SC instance is a \YES instance. By renaming, let us assume that the collection $\{S_1, \ldots, S_t\}$ forms a valid set cover of the instance. We add the edges in the set $\EE^\pr=\{(X_{i,1}, X_{i,2}): i\in[t]\}$ in the graph \GG. Let the resulting graph be \HH. We claim that the core centrality of every vertex in $\{Z_{i,1}: i\in[n]\}\cup\{X_{i,j}:i\in[t],j\in[n]\}\cup\{W_{i,1}:i\in[t]\}\cup\{W_{i,j}:i\in[m], j\in[4]\}$ is $3$ in \HH. We first observe that the core centrality of every leader remains $3$ even after adding the edges in $\EE^\pr$. Also, for any $i\in[m]$, if the edge $(X_{i,1},X_{i,2})$ is added in the graph, the core centrality of the $n+1$ vertices $X_{i,t}, t\in [n]$ and $W_{i,1}$ become $3$. Hence after addition of the edges in $\EE^\pr$ in \GG, the core centrality of every vertex in $\{X_{i,j}:i\in[t],j\in[n]\}\cup\{W_{i,1}:i\in[t]\}$ becomes $3$. Since $\{S_1, \ldots, S_t\}$ forms a set cover for \UU, the core centrality of every vertex in $\{Z_{i,1}: i\in[n]\}$ becomes $3$. Lastly, the core centrality of every vertex in $\{W_{i,j}:i\in[m], j\in[4]\}$ was already $3$ in \GG and since addition of edges never decreases the core centrality of any vertex, the core centrality of these vertices are at least $3$ in \HH. Hence the \HL instance is a \YES instance.
 
 For the other direction, let us assume that there exists a set $\EE^\pr$ of edges such that in the graph $\HH=(\VV[\GG],\EE[\GG]\cup\EE^\pr)$, the core centrality of at least $d$ vertices in $\VV[\GG]\setminus \LL$ is at least $3$; let the set of followers with core centrality at least $3$ in \HH be $\YY\subseteq\VV[\GG]\setminus \LL$. Since adding edges in the graph never decreases the core centrality of any vertex, we have $\{W_{i,j}:i\in[m], j\in[4]\}\subseteq\YY$. Let us consider the following subset $\JJ\subseteq[m]$ defined as: $\JJ=\{j\in[m]:\exists 1\le i<k\le n \text{ with } (X_{j,i},X_{j,k})\in\FF\}$. Since $|\FF|\le b$ and $b=t$, we have $|\JJ|\le t$. We claim that $\{S_j: j\in\JJ\}$ forms a set cover for \UU. Suppose not, then at most $n-1$ vertices in $\{Z_{i,1}:i\in[n]\}$ can belong to \YY since $Z_{\el,1}$ does not belong to \YY if $u_\el$ is uncovered. Also, any vertex in $\{X_{i,\el}, W_{i,1}: i\in[m]\setminus\JJ,\el\in[n]\}$ does not belong to \YY. Hence, we have $|\YY|\le 4m+t(n+1)+n-1<d$ which contradicts our assumption that \FF forms a valid solution for the \HL instance. Hence the \SC instance is a \YES instance.
\end{proof}

\begin{figure}[t]
    \centering
    {\includegraphics[width=0.42\textwidth]{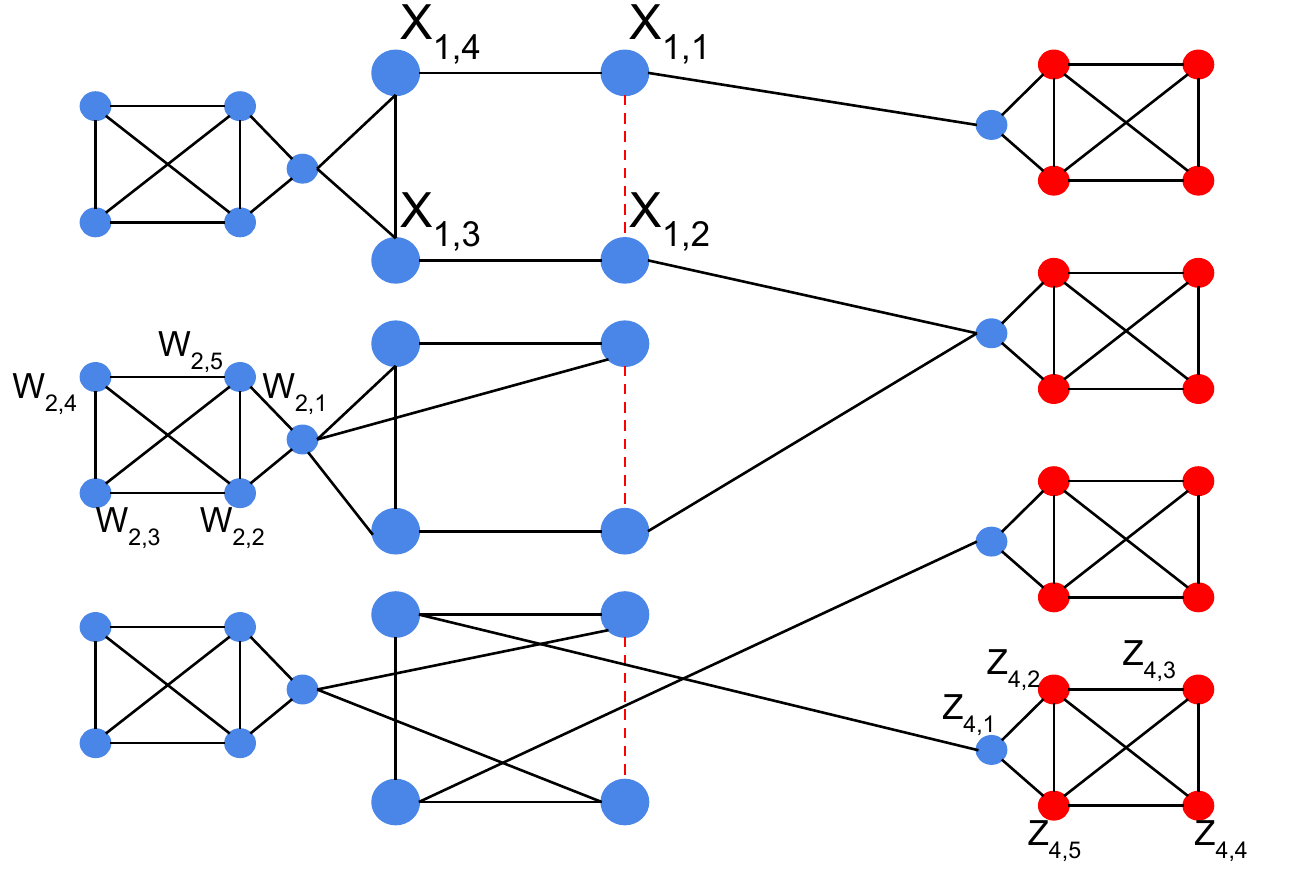}}
    \caption{\textbf{Example construction for hardness from Set Cover where  $\UU=\{u_1,u_2,u_3,u_4\}, S=\{S_1,S_2,S_3\}, S_1=\{u_1,u_2\},S_2=\{u_2\},S_3=\{u_3,u_4\}$. The red nodes are the leaders and the blue nodes are the followers. \label{fig:hardness_ex1}}}
    \vspace{-2mm}
\end{figure}

\Cref{thm:CHL_hard} along with well known inapproximability result for the \SC problem immediately give us the following result.

\begin{corollary}\label{cor:core_inapp}
 There does not exists any polynomial time algorithm for approximating the number of edges one needs to add in the \HL problem for core centrality within an approximation ratio of $(1-\alpha)\ln n$ for any constant $\alpha$ assuming $\Pb\ne\NP$ even when the core centrality of every leader is $3$.
\end{corollary}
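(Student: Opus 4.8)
The plan is to upgrade the reduction of \Cref{thm:CHL_hard} from a mere many-one reduction for the decision problem into an approximation-preserving (gap-preserving) reduction from \SC, and then invoke the tight inapproximability of \SC: Dinur and Steurer proved that for every constant $\alpha\in(0,1)$ it is \NPH to approximate the minimum size of a set cover within a factor of $(1-\alpha)\ln n$, where $n$ is the size of the universe. Concretely, starting from a \SC instance $(\UU,\SS)$ I would build exactly the graph \GG and the leader set \LL of \Cref{thm:CHL_hard} --- note that neither depends on the parameter $t$ --- and argue that, for an appropriate target $d$, the minimum number of follower--follower edges whose addition makes at least $d$ followers reach core centrality $3$ is controlled, up to the gap-preserving error, by $\mathrm{OPT}_{\mathrm{SC}}$.

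The completeness side is already in \Cref{thm:CHL_hard}: a cover of size $s$ yields, via the $s$ chords $(X_{i,1},X_{i,2})$ of the corresponding paths, a solution making $4m+n(s+1)+s$ followers core-$3$. The soundness side is where the work is, and I would generalise the backward analysis of \Cref{thm:CHL_hard} from that particular chord set to an arbitrary added edge set \WW. The claims to pin down are: (i) for each $i$, the only single edge that can move an $X_{i,\cdot}$ vertex into a $3$-core is the chord $(X_{i,1},X_{i,2})$, so making all $n$ vertices of the $S_i$-path core-$3$ costs at least that chord; (ii) any non-chord edge raises the core centrality of only $O(1)$ followers; hence (iii) a target $d$ dominated by the $X$-vertices forces \WW to contain the chords of $\approx\mathrm{OPT}_{\mathrm{SC}}$ distinct paths, whose sets must cover \UU, since otherwise not all $n$ vertices $Z_{j,1}$ can become core-$3$. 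Together with completeness this gives a gap-preserving reduction, so a $(1-\alpha)\ln n$-approximation for the number of added edges would yield a $(1-\alpha)\ln n$-approximation for \SC, contradicting Dinur--Steurer unless $\Pb=\NP$. Finally I would handle the blow-up: the HL instance has $N=\Theta(mn)$ vertices, and since the standard hard \SC instances have $m=\mathrm{poly}(n)$ we get $\ln N=\Theta(\ln n)$; running the argument from \SC hardness with gap parameter $\alpha/2$ on large enough instances absorbs the constant factor and recovers the stated $(1-\alpha)\ln N$ bound.

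I expect the genuine obstacle to be exactly step (ii)--(iii): certifying that the reduction is truly \emph{gap}-preserving, not merely a correct decision reduction. The subtle point is that in \Cref{thm:CHL_hard} the target $d$ is coupled to the budget $t$, so a naive argument only separates ``$s$ edges suffice'' from ``$s+1$ edges suffice'' --- an additive, not a multiplicative, gap. Converting this into a multiplicative $(1-\alpha)\ln n$ gap needs a tight accounting of how much core-centrality progress non-chord edges can buy, in particular ruling out that edges placed among the $Z$-gadgets of distinct uncovered elements can stand in for a covering family of path-activations; this may well require a mild strengthening of the $Z$- and $W$-gadgets so that follower--follower edges crossing between different element/set gadgets are provably useless.
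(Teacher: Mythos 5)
Your plan is the same as the paper's: the paper's entire proof of \Cref{cor:core_inapp} is the one-sentence assertion that the reduction in \Cref{thm:CHL_hard} is approximation preserving, combined with the $(1-\alpha)\ln n$ lower bound for \SC. The obstacle you flag at the end of your proposal is therefore not a side issue you may safely defer --- it is the entire content of the corollary, and it is real. To make it concrete: in the instance of \Cref{thm:CHL_hard}, adding the chord $(X_{i,1},X_{i,2})$ to \emph{any} path $i$ turns that path into a cycle and places the $n+1$ followers $X_{i,1},\ldots,X_{i,n},W_{i,1}$ into a $3$-core, independently of whether the sets so selected cover \UU. Hence adding $t+1$ chords on any $t+1$ distinct paths already yields at least $4m+(t+1)(n+1)=4m+n(t+1)+t+1>d$ followers of core centrality $3$. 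So for every input, the optimum of the constructed \HL instance lies in $\{t,\,t+1\}$ (at most $t$ when a cover of size $t$ exists, exactly $t+1$ otherwise, by the soundness argument). The reduction therefore separates the two cases of the gap--\SC instance only by an \emph{additive} one edge; no algorithm with ratio as large as $1+\nfrac{1}{t}$, let alone $(1-\alpha)\ln n$, is ruled out. Your steps (ii)--(iii) cannot be ``certified'' for this construction because they are false for it: non-covering chords are exactly as productive, per edge, as covering ones.

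Closing the gap genuinely requires a different construction, not a tighter accounting. One has to (a) decouple the target $d$ from the budget $t$, e.g.\ by demanding that all $n$ vertices $Z_{j,1}$ (suitably amplified) be hidden, so that $d$ is determined by the instance alone and the objective value tracks $\mathrm{OPT}_{\SC}$ multiplicatively; (b) weight the element gadgets --- say, hang $M\gg mn$ followers off each $Z_{j,1}$ whose core number $3$ is contingent on $Z_{j,1}$'s --- so that leaving one element uncovered cannot be compensated by activating extra paths; and (c) block cheap non-covering activations of the $Z_{j,1}$'s themselves: since each $Z_{j,1}$ sits at distance one from the $3$-core $\{Z_{j,2},\ldots,Z_{j,5}\}$, a perfect matching among the $Z_{j,1}$'s (which is a legal follower--follower edge set) makes all of them core-$3$ with only $\lceil\nfrac{n}{2}\rceil$ edges, potentially undercutting $\mathrm{OPT}_{\SC}$. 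None of (a)--(c) is addressed by the reduction of \Cref{thm:CHL_hard} or by your proposal, so the argument as it stands does not prove the corollary; the paper's own one-line justification suffers from exactly the same defect.
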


\begin{proof}
 The result follows from the observation that the reduction in the proof of \Cref{thm:CHL_hard} is approximation preserving and the $(1-\alpha)\ln n$ inapproximability result for the \SC problem for any constant $\alpha$ assuming $\Pb\ne\NP$~\cite{Moshkovitz15}.
\end{proof}


We now show that the hardness result in \Cref{thm:CHL_hard} is almost tight in the sense that if the core centrality of every leader is at most $1$ in the network, then the corresponding \HL problem is polynomial time solvable.

\begin{proposition}\label{thm:core_poly}
 There exists a polynomial time algorithm for the \HL problem for core centrality if the core centrality of every leader in the network is at most $1$.
\end{proposition}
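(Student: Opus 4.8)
The plan is to show that, in this restricted regime, solvability is governed by a simple counting condition that can be checked in polynomial time. First I would compute the core number of every vertex (doable in polynomial time by the standard peeling procedure, as already noted) and put $\mu := \max_{l\in\LL} c(\GG,l)\in\{0,1\}$. Throughout I will use two elementary facts: a vertex has core number $0$ exactly when it is isolated and core number at least $1$ exactly when it has positive degree; and adding edges never decreases any core number.

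If $\mu = 0$, every leader is isolated, so already taking $\WW=\emptyset$ makes $c(\GG,f)\ge 0 = c(\GG,l)$ for every follower $f$ and leader $l$; hence the instance is a \YES instance if and only if $|\FF|\ge d$. If $\mu = 1$, fix a leader $l_0$ with $c(\GG,l_0)=1$ and let $N_0$ be the number of non-isolated followers of $\GG$. For necessity, observe that for any candidate $\WW\subseteq\FF\times\FF$ and $\GG'=(\VV,\EE\cup\WW)$ we have $c(\GG',l_0)\ge c(\GG,l_0)=1$; therefore every follower that the target demands must have core number at least $1$ in $\GG'$, i.e.\ be non-isolated in $\GG'$. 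Since at most $N_0$ followers are already non-isolated and each added edge makes at most two currently-isolated followers non-isolated, any valid $\WW$ satisfies $|\WW|\ge \lceil \max(0,d-N_0)/2\rceil$, and of course $d\le|\FF|$. For sufficiency, assuming $d\le|\FF|$ and $b\ge\lceil\max(0,d-N_0)/2\rceil$, I would let $\WW$ be a near-perfect matching on $\max(0,d-N_0)$ currently-isolated followers (if that number is odd and no isolated partner is left for the last one, route its edge to an arbitrary follower). The point is that every edge of $\WW$ has an endpoint of degree exactly $1$ in $\GG'$, hence cannot belong to any subgraph of $\GG'$ of minimum degree at least $2$; consequently every such subgraph already lives in $\GG$, and since no edge of $\WW$ touches a leader we get $c(\GG',l)=c(\GG,l)\le 1$ for all $l\in\LL$, while at least $d$ followers (the $N_0$ old ones together with the freshly activated ones) have core number at least $1=\max_{l\in\LL}c(\GG',l)$. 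Thus, when $\mu=1$, the instance is \YES if and only if $d\le|\FF|$ and $\lceil\max(0,d-N_0)/2\rceil\le b$, which is testable in polynomial time.

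The only point that needs genuine care is the possibility that an optimal strategy deliberately raises some leader's core number above $1$; I expect this to be the main thing a reader worries about, but it is dispatched for free by the necessity argument, which never inspects the leaders' final core numbers — it only uses that the single leader $l_0$ retains core number at least $1$ in $\GG'$, and that alone forces the edge lower bound on $\WW$, so no case analysis on the leaders' final cores is required.
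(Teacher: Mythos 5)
Your proof is correct and follows essentially the same route as the paper's: count the followers that already have positive degree (hence core number at least $1$), observe that each added edge can activate at most two isolated followers, and realize the bound with a near-perfect matching, so the instance is a \YES instance iff $b\ge\lceil\max(0,d-N_0)/2\rceil$ (and $d\le|\FF|$). Your write-up is in fact somewhat more careful than the paper's, which silently skips both the $\mu=0$ case and the verification that the matching edges cannot push any leader into a $2$-core.
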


\begin{proof}
 
 We observe that if the degree of any vertex $x$ is at least $1$, then its core centrality is at least $1$. Let $\FF^\pr\subseteq\FF$ be the subset of followers whose core centrality is at least $1$; say $|\FF^\pr|=d^\pr$. Hence the degree of every vertex in $\FF\setminus\FF^\pr$ is $0$. We add $\lceil \nfrac{(d-d^\pr)}{2}\rceil$ new edges such that the degree of at least $d-d^\pr$ vertices in $\FF\setminus\FF^\pr$ becomes at least $1$ in the resulting graph. We output \YES if $\lceil \nfrac{(d-d^\pr)}{2}\rceil\ge b$; otherwise we output \NO. Since any optimal solution must add at least $\lceil \nfrac{(d-d^\pr)}{2}\rceil$ edges, our algorithm is correct.
  
\end{proof}

\section{Captain Networks}
\label{sec:captain_nw}

In this section, we show the ``captain network'', originally proposed by Waniek et al.~\cite{DBLP:conf/atal/WaniekMRW17}, also ensures that the core centrality of any leader is at most the core centrality of any captain. They propose two constructions; one for single leader and another for multiple leaders.

%
%
%

\subsection{For Multiple Leaders} 
We first describe the construction in~\cite{DBLP:conf/atal/WaniekMRW17}. 
The set \LL of leaders forms a clique. Each leader $l_i\in \LL$ has a corresponding group of $p$ captains $\CC_i=\{C_{i,1}, C_{i,2},\cdots, C_{i,p}\}$ and $l_i$ is connected to all vertices in $\CC_i$. Assuming that $|\LL|=h \geq 2$, there are $h$ such sets of captains $\{\CC_1,\CC_2,\cdots,\CC_h\}$. All vertices in the captain sets are connected as a complete $h$-partite graph. A captain $C_{i,j}$ serves two things: 1) It helps to hide the leader by being higher or of same centrality than the leader with maximum centrality. 2) It spreads the influence from the leader to the rest of the network. The remaining vertices $\XX=\{X_1,X_2,\cdots, X_m\}$ are each connected to one captain from each group $C_i$. The follower set in the network is $\FF=\XX\cup \CC_1\cup \CC_2\cup\cdots\cup\CC_h$. Let us call the resulting graph $\GG_M$. We now show that the core centrality of every leader in $\GG_M$ is at most the core centrality of every captain.

\begin{theorem} \label{thm:core_captain_mult}
Given a captain network $\GG_M$, let $r=\lfloor \frac{m}{p}\rfloor$ denotes the minimum number of connections that a captain $C_{i,j}$ has with vertices from $X$. Assuming we have at least $2$ leaders and $p>1$, the core centralities of the captains are either greater or same as the leaders. 
\end{theorem}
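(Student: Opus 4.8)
The plan is to compute the core centrality of leaders and captains separately in $\GG_M$ and then compare. First I would pin down the degrees: each leader $l_i$ is connected to the other $h-1$ leaders and to its $p$ captains, so $\deg(l_i) = h-1+p$; each captain $C_{i,j}$ is connected to its leader $l_i$, to all $(h-1)p$ captains in the other groups (the complete $h$-partite structure), and to roughly $r = \lfloor m/p \rfloor$ vertices of $\XX$, so $\deg(C_{i,j}) \ge 1 + (h-1)p + r$; each $X_k$ has degree exactly $h$. The claim is that the core number of every leader is at most the core number of every captain, so it suffices to exhibit a $k$-core containing all captains with $k$ at least the core number of the leaders, and separately upper bound the leaders' core number (or simply show they lie in no $k'$-core with $k'$ exceeding that value).

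The key step is to identify the subgraph $\HH$ induced by $\LL \cup \CC_1 \cup \cdots \cup \CC_h$ — i.e., drop the $\XX$ vertices. Inside $\HH$, every captain still has degree $1 + (h-1)p$ (its leader plus the other-group captains) and every leader has degree $h-1+p$. I would argue this induced subgraph is itself a $k$-core for $k = \min\{h-1+p,\, 1+(h-1)p\}$, and under the hypotheses $h \ge 2$ and $p > 1$ one checks $1+(h-1)p \ge h-1+p$ (equivalently $(h-1)(p-1)\ge 0$, which holds), so $k = h-1+p = \deg(l_i)$. Hence every captain has core number at least $h-1+p$. On the other hand, the core number of a leader is at most its degree $h-1+p$. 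Combining, $\text{core}(C_{i,j}) \ge h-1+p \ge \text{core}(l_i)$, which is exactly the assertion. If the statement intends the stronger ``captains strictly higher or equal,'' I would additionally incorporate the $\XX$-neighbours: appending the $X_k$'s that are adjacent to enough captains could push the captains into a larger core, but for the stated inequality the core of $\HH$ already suffices, and the role of $r$ in the statement is presumably to handle a refined count or the single-leader analogue — I would keep the argument robust to whether or not $r$ is actually needed.

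The main obstacle I anticipate is being careful about the peeling process that defines core number: one must verify that when computing the core number of a leader, the presence of the $\XX$ vertices and captains cannot create a core of size exceeding $h-1+p$ that contains a leader — but this is immediate since a vertex's core number never exceeds its degree, so no subtlety is really required on the leader side. The genuine work is confirming that the induced subgraph on $\LL\cup\CC_1\cup\cdots\cup\CC_h$ has minimum degree exactly $h-1+p$ (i.e., that the complete $h$-partite structure on the captains together with the leader-clique and the leader-to-own-group edges yields this), and that the inequality $1+(h-1)p \ge h-1+p$ indeed reduces to $(h-1)(p-1)\ge 0$. Both are short calculations. A secondary point worth a sentence is the edge case $p=1$ or $h=1$, which the hypotheses $p>1$ and $h\ge 2$ explicitly exclude and where the degree comparison could fail, so I would note that the hypotheses are used precisely here.
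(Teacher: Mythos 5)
Your proposal is correct and follows essentially the same route as the paper: restrict to the induced subgraph on $\LL\cup\CC_1\cup\cdots\cup\CC_h$, note that its minimum degree is $h-1+p$ (attained by the leaders, since $1+(h-1)p\ge h-1+p$ under the hypotheses), conclude every captain lies in an $(h-1+p)$-core, and bound each leader's core number above by its total degree $h-1+p$ --- indeed you make this last step more explicit than the paper does. One minor arithmetic slip: $1+(h-1)p-(h-1+p)=(h-2)(p-1)$, not $(h-1)(p-1)$, but this quantity is still nonnegative for $h\ge 2$ and $p>1$, so the conclusion is unaffected.
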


\begin{proof}
In $\GG_M$, the vertices in $\XX$ do not contribute in the core centrality of either the leaders or the captains. We observe that the degree of any vertex in $X$ is $h$. So their core centrality can be at most $h$. We claim that the captains and the leaders are in higher core than $h$. Consider a induced subgraph $G' \subset \GG_M$ that includes only the leaders, the captains and the edges between them. In $G'$, the degree of any captain $C_{i,j}$ is $p(h-1)+1$; on the other hand, the degree of any leader $l_i$ is $h-1+p$. So, in $G'$, all the captains and the leaders are at least in $d_{min}=min\{d(G,c_{i,j}),d(G,l_i)\}$-core. This comes from the fact that all the nodes (captains and leaders) have a minimum degree of $d_{min}$ and thus they are at least in $d_{min}$-core. Note that, $d(G,c_{i,j})- d(G,l_i)= p(h-1)+1-(h-1+p)=(h-2)(p-1)\geq 0$ as $h\geq 2$ and $p>1$. This implies the captains have higher degree than the leaders in $G'$. So, the captains have at least the same core-centrality as the leaders. Our claim is proved. Additionally, any vertex in $X$ is in $h$-core and $h<d_{min}$ assuming $p>1$. 
\end{proof}
\begin{corollary} 
\label{cor:core_hide_leader}
Given the same captain network $G$, assuming $h> 2$ and $p>1$, the core centrality of all the captains is strictly larger than the leaders. 
\end{corollary}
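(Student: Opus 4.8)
The plan is to reuse essentially the computation already carried out in the proof of \Cref{thm:core_captain_mult}, but to extract the strict inequality from it. Recall that in that proof one restricts attention to the induced subgraph $G'$ consisting of the leaders, the captains, and the edges among them, and observes that in $G'$ every captain $C_{i,j}$ has degree $p(h-1)+1$ while every leader $l_i$ has degree $h-1+p$, so that every vertex of $G'$ lies in the $d_{min}$-core with $d_{min}=\min\{p(h-1)+1,\ h-1+p\}=h-1+p$ (the minimum being attained at the leaders, since $p(h-1)+1-(h-1+p)=(h-2)(p-1)\ge 0$). The key point is that under the strengthened hypotheses $h>2$ and $p>1$ we have $(h-2)(p-1)>0$ strictly, hence every captain has degree \emph{strictly} larger than $d_{min}$ in $G'$.

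First I would make precise that the core number of every captain is then at least $d_{min}+1$. This follows because the captains, together with the leaders, already form a subgraph in which every captain has degree at least $d_{min}+1$; more carefully, consider the subgraph $G''$ obtained from $G'$ by iteratively deleting vertices of degree $<d_{min}+1$ — the leaders may get removed, but each captain $C_{i,j}$ is adjacent (in $G'$) to all $p(h-1)$ captains in the other groups, and $p(h-1)\ge d_{min}+1$ exactly when $(h-2)(p-1)\ge 1$, i.e. under $h>2,p>1$; so after deleting all leaders the captains still induce a graph of minimum degree $p(h-1)\ge d_{min}+1$, witnessing that every captain lies in a $(d_{min}+1)$-core. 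Hence the core centrality of every captain is at least $d_{min}+1$.

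Next I would bound the core centrality of every leader from above by $d_{min}$. Here one argues that in the whole graph $\GG_M$ the only neighbours of a leader $l_i$ are the $h-1$ other leaders and the $p$ captains in $\CC_i$, so $l_i$ has degree exactly $h-1+p=d_{min}$ in $\GG_M$; a vertex of degree $d_{min}$ can never belong to a $(d_{min}+1)$-core, so its core centrality is at most $d_{min}$. Combining the two bounds gives that the core centrality of every captain is at least $d_{min}+1$, strictly larger than the core centrality of any leader, which is the claim.

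The main obstacle is the middle step: showing the captains genuinely survive into a higher core once the leaders are peeled away, rather than merely having high degree in $G'$. This requires checking that the captains alone (the complete $h$-partite graph on the $\CC_i$'s) already have minimum degree $p(h-1)\ge d_{min}+1$, i.e. verifying the inequality $p(h-1)\ge h+p$, equivalently $(h-2)(p-1)\ge 1$, which holds precisely because $h>2$ and $p>1$ force $h-2\ge 1$ and $p-1\ge 1$. Everything else is the arithmetic already present in the proof of \Cref{thm:core_captain_mult}.
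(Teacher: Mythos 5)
Your overall strategy is the same as the paper's: peel off the leaders, observe that the captains alone form a complete $h$-partite graph of minimum degree $p(h-1)$ and hence lie in a $p(h-1)$-core, and bound each leader's core number above by its total degree $h-1+p$. Your write-up is in fact more careful than the paper's about why the captains survive the peeling. However, the decisive arithmetic step is off by one, and this is exactly where the claim is delicate. You assert that $p(h-1)\ge d_{min}+1$ (with $d_{min}=h-1+p$) is equivalent to $(h-2)(p-1)\ge 1$; expanding, $p(h-1)-(h+p)=(h-2)(p-1)-2$, so the correct equivalence is $(h-2)(p-1)\ge 2$. The hypotheses $h>2$ and $p>1$ only guarantee $(h-2)(p-1)\ge 1$, and the two conditions genuinely differ at $h=3$, $p=2$: there $p(h-1)=4=h-1+p$, so the captains' internal core number equals the leaders' degree. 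In that case every leader and every captain has core number exactly $4$ (a $5$-core could contain neither leaders, of degree $4$, nor vertices of $X$, of degree $3$, and the captains-only graph has maximum degree $4$), so the strict inequality of the corollary actually fails and your argument cannot be repaired there.

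To be fair, the paper's own proof stumbles at the same spot: it needs $(h-2)(p-1)>1$ and merely remarks that this ``is possible'' when $h>2$ and $p>1$. The statement should really carry the hypothesis $(h-2)(p-1)\ge 2$ (equivalently, exclude the single case $h=3$, $p=2$); under that strengthened assumption your proof goes through verbatim and is, if anything, a cleaner and more explicit version of the paper's argument.
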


\begin{proof}
The key idea is that the captains will form a core only among themselves and that will be higher core than the leaders. Now, for any captain $c_{i,j}$ the degree among themselves is $p(h-1)$. Now that, $ p(h-1)-(h-1+p)=(h-2)(p-1) -1 > 0$ or $(h-2)(p-1) > 1$ is possible when $h> 2$ and $p> 1$. So, the captains have larger core-centrality than the leaders. 
\end{proof}

\subsection{For Single Leader}

We start with the construction in~\cite{DBLP:conf/atal/WaniekMRW17} when $h=1$ and show the core centralities of the leaders and the captains remain same in this case.

A single leader $l$ ($h=1$) has a corresponding two sets of $p$ captains $C_1=\{C_{1,1}, C_{1,2},\cdots, C_{1,p}\}$ and similarly it has $C_2$. All vertices in the captain sets are connected as a complete bipartite graph. Each of the remaining vertices in $X=\{x_1,x_2,\cdots, x_m\}$ is connected to one captain from each group $C_1$ and $C_2$. The follower set in the network is $F=\{X\cup C_1\cup C_2\}$. 

\begin{corollary} 
Given the captain network described above, let $r=\lfloor \frac{m}{p}\rfloor$ denote the minimal number of connections that a captain, $c_{i,j}$ has with vertices from $X$. Assuming $h=1$, the core centralities of all the captains are same as the leader.
\end{corollary}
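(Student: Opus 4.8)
The plan is to adapt the argument of \Cref{thm:core_captain_mult} to the degenerate case $h=1$, where the ``complete $h$-partite graph'' on the captain sets becomes the complete bipartite graph between $C_1$ and $C_2$. First I would form the induced subgraph $G'$ on the leader $l$ together with the $2p$ captains and all edges among them, exactly as in the multi-leader proof. In $G'$ the degree of any captain $C_{i,j}$ is $p + 1$ (it is adjacent to all $p$ captains in the other group and to $l$), while the degree of the leader $l$ is $2p$. So the minimum degree of $G'$ is $d_{\min} = \min\{p+1, 2p\} = p+1$ whenever $p \ge 1$, and hence every vertex of $G'$ — captains and leader alike — lies in a $(p+1)$-core; in particular all of them have core centrality at least $p+1$.

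Next I would argue the matching upper bound, i.e.\ that the core centrality of $l$ is \emph{at most} that of every captain, which for $h=1$ I claim is actually equality. The vertices in $X$ each have degree $2$ (one edge into $C_1$, one into $C_2$), so as in the proof of \Cref{thm:core_captain_mult} they cannot raise anyone into a core higher than what $G'$ already witnesses once $p+1 > 2$, i.e.\ $p > 1$; for the borderline $p=1$ one checks the tiny graph directly. The point is that any $k$-core of $\GG_M$ containing a captain, after deleting the degree-$2$ vertices of $X$ (which can only participate in cores of order $\le 2$), is contained in $G'$, so the core number of every captain equals $d_{\min}$; the same deletion argument applied to $l$ shows $l$'s core number is also $d_{\min}$. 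Hence captains and leader share the common core centrality $p+1$.

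The main obstacle — really the only delicate point — is handling the interaction with the $X$-vertices cleanly: one must be careful that a captain's extra edges to $X$ (each captain has $r = \lfloor m/p \rfloor$ of them) do not combine with the captain–captain and captain–leader edges to push some captains strictly above the leader, which would break the claimed \emph{equality}. The resolution is that any core containing the leader must also contain enough captains, and since the $X$-vertices have degree only $2$ they get peeled off before any core of order $\ge 3$ is reached, so the surviving subgraph is exactly $G'$ and the leader and captains are peeled off together at level $d_{\min}$. I would therefore phrase the proof as a $k$-core peeling argument: repeatedly remove vertices of degree $< k$ for $k = 3, 4, \dots$; the $X$-vertices vanish at $k=3$, after which we are inside $G'$, all of whose vertices have degree exactly $d_{\min}$ or $2p \ge d_{\min}$, so the leader and every captain survive up to and including $k = d_{\min} = p+1$ and are removed at $k = p+2$. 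This yields core centrality exactly $p+1$ for the leader and for every captain, proving the corollary.
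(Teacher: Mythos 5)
Your proposal is correct and follows essentially the same route as the paper: compute degrees in the leader--captain subgraph ($p+1$ for captains, $2p$ for the leader), note the $X$-vertices have degree $2$ and hence are peeled away before any core of order $\ge 3$, and conclude a common core number of $\min\{2p,p+1\}=p+1$ (with the $p=1$ case checked separately). Your write-up is in fact somewhat more careful than the paper's, which asserts the matching upper bound only implicitly, whereas your peeling argument makes the exact equality explicit.
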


\begin{proof}
The proof follows from that of \Cref{thm:core_captain_mult}. The leader has degree $2p$ where as the captains have degree $1+p+r$. But the vertices in $X$ has only degree $2$. So the leader and the captains will be in the higher core and it will be $min\{2p,p+1\}$. Assuming $p\geq 2$ all the captain vertices and the leader will be in $p+1$-core. If $p=1$, all the vertices in the network will be in $2$-core.
\end{proof}

%
%
\section{Simulation results}
In this section, we evaluate the performance of our $2$ approximation algorithm in \Cref{thm:2apx_opt_fol_deg} using synthetic networks. For brevity, let us call our algorithm in \Cref{thm:2apx_opt_fol_deg} as HLDA and called the lower bound used in \Cref{thm:2apx_opt_fol_deg} as LB. We also show how well the leaders can be hidden in the captain network via the core centrality measure. Solutions were implemented in Java and experiments conducted on $3.30$ GHz Intel cores with $30$ GB RAM.

\subsection{Evaluation of 2-Approximation Algorithm in \Cref{thm:2apx_opt_fol_deg}}
\begin{figure}[t]
  \vspace{-1mm}
    \centering
    \subfloat[BA (avg. degree = 2) ]{\includegraphics[width=0.22\textwidth]{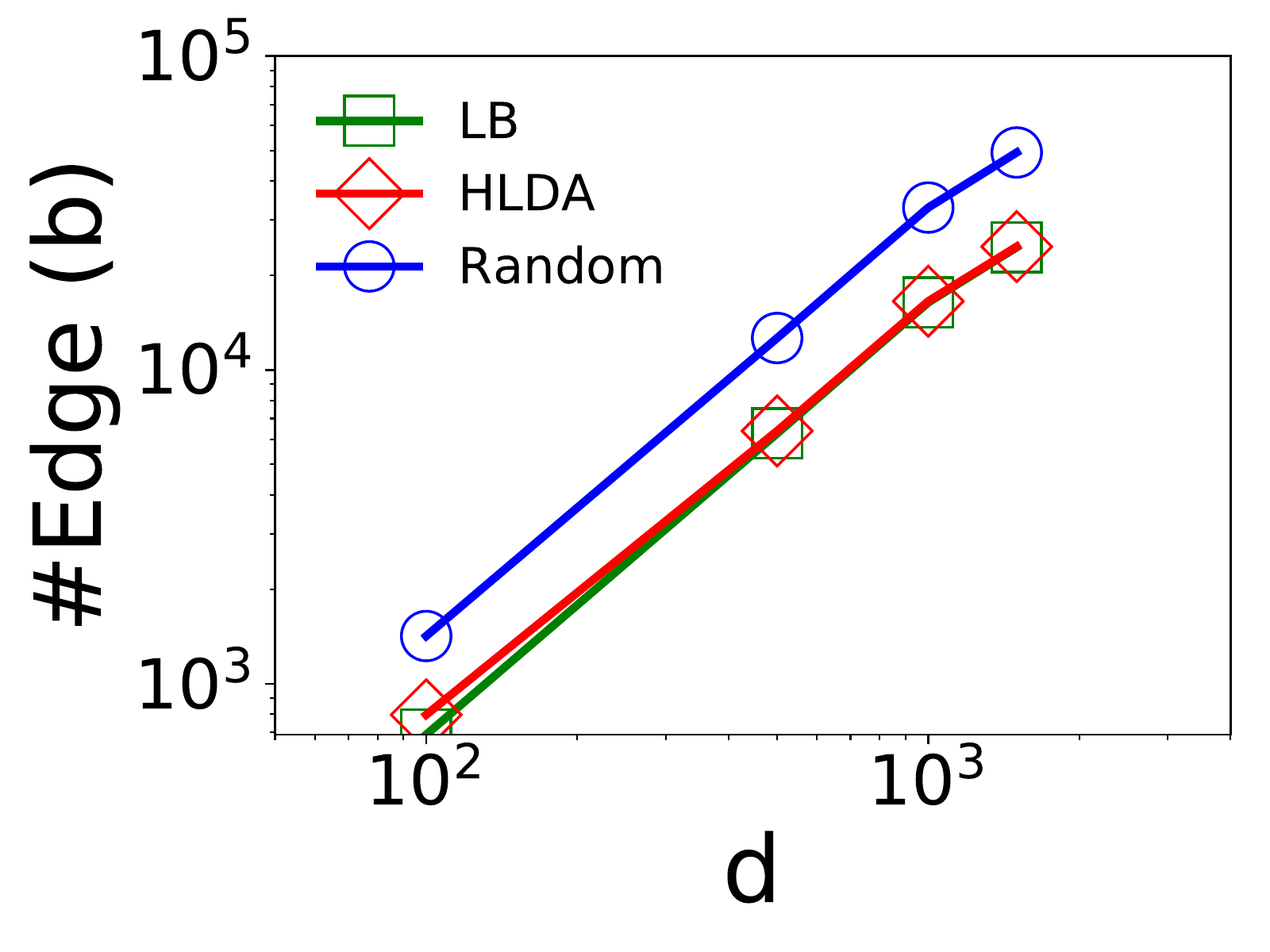}\label{fig:BA_2}}
     \subfloat[WS (avg. degree = 2)]{\includegraphics[width=0.22\textwidth]{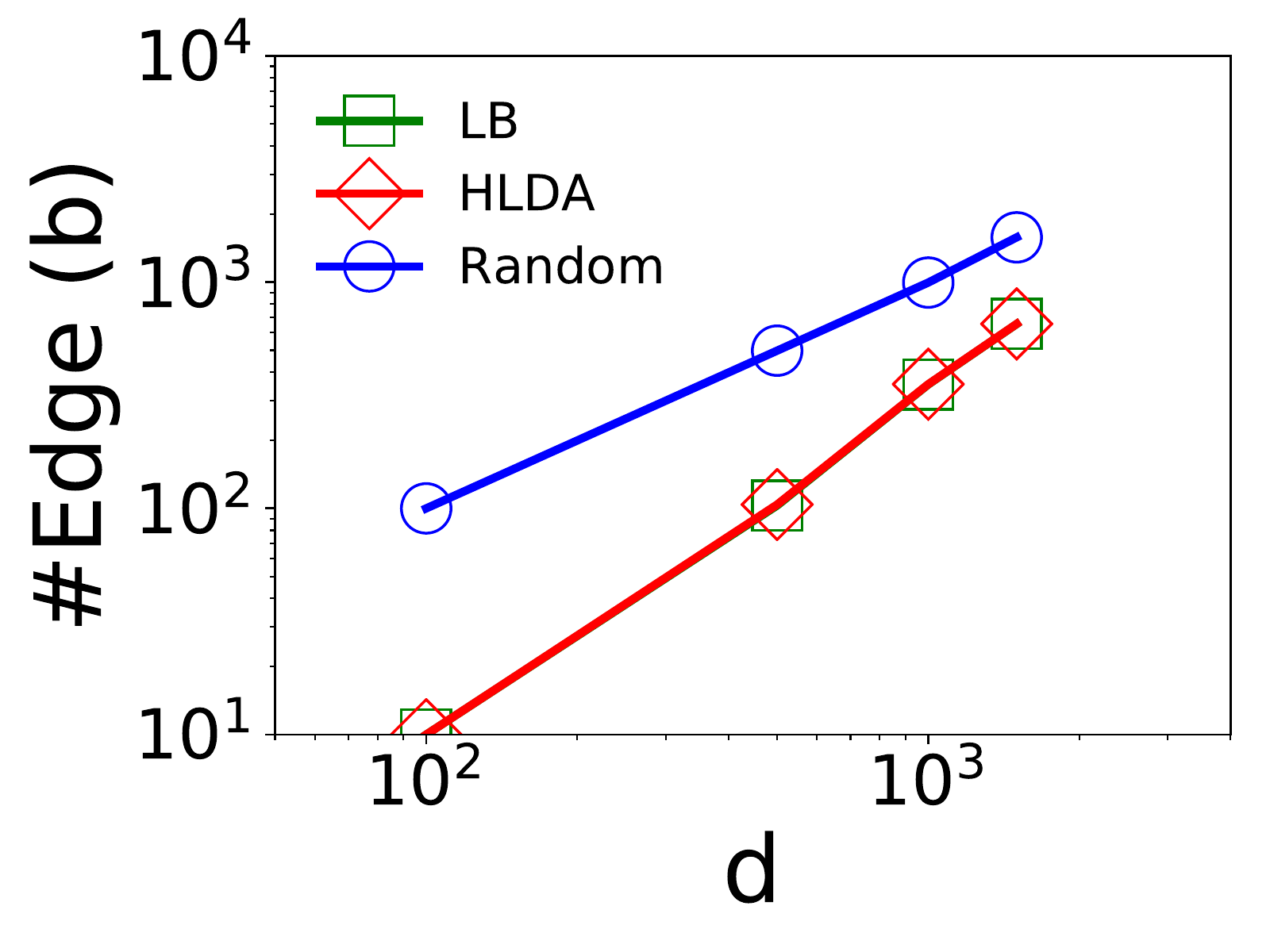}\label{fig:WS_2}} \\
    \subfloat[BA (avg. degree = 4)]{\includegraphics[width=0.22\textwidth]{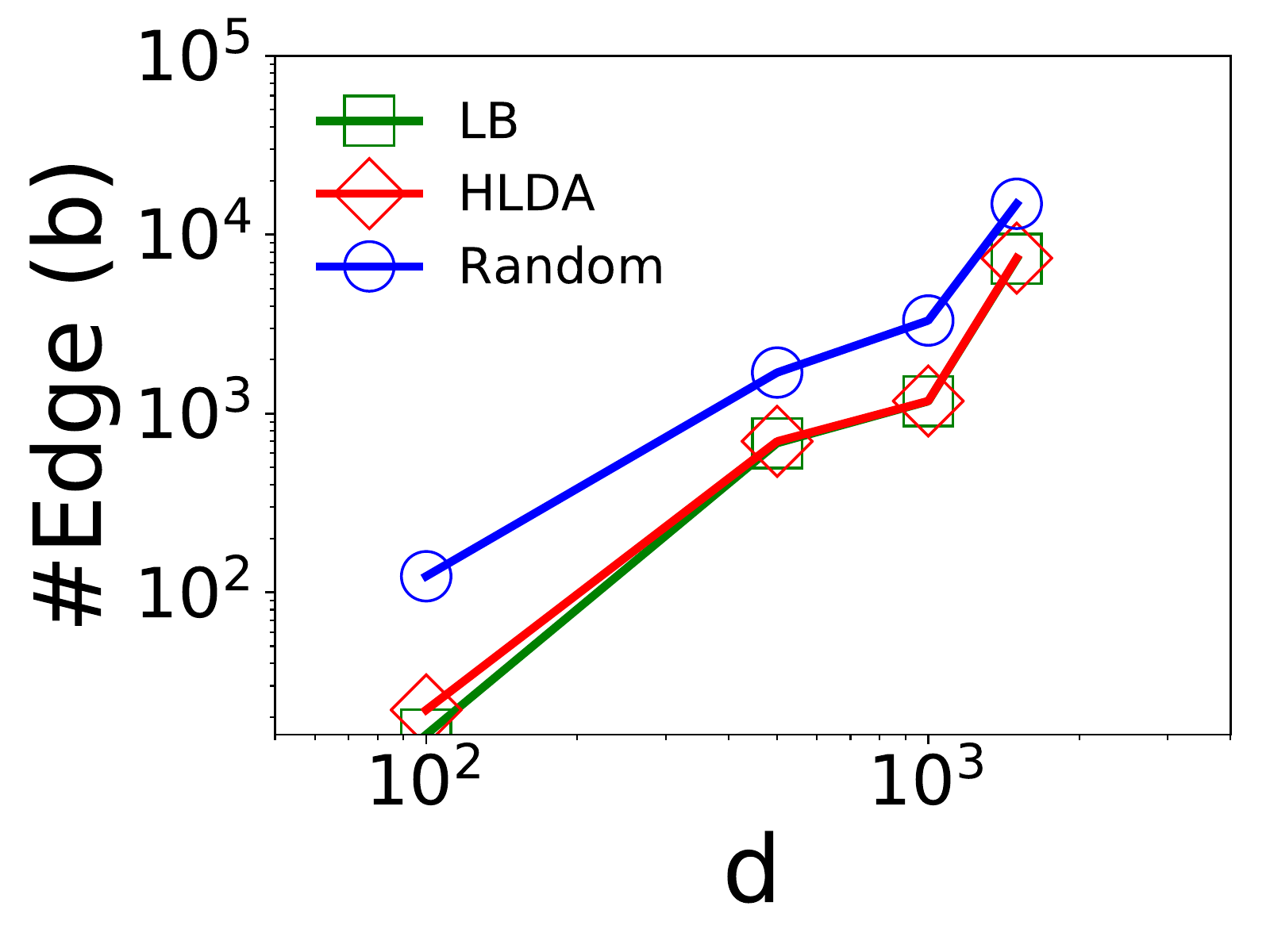}\label{fig:BA_4}} 
    \subfloat[WS (avg. degree = 4)]{\includegraphics[width=0.22\textwidth]{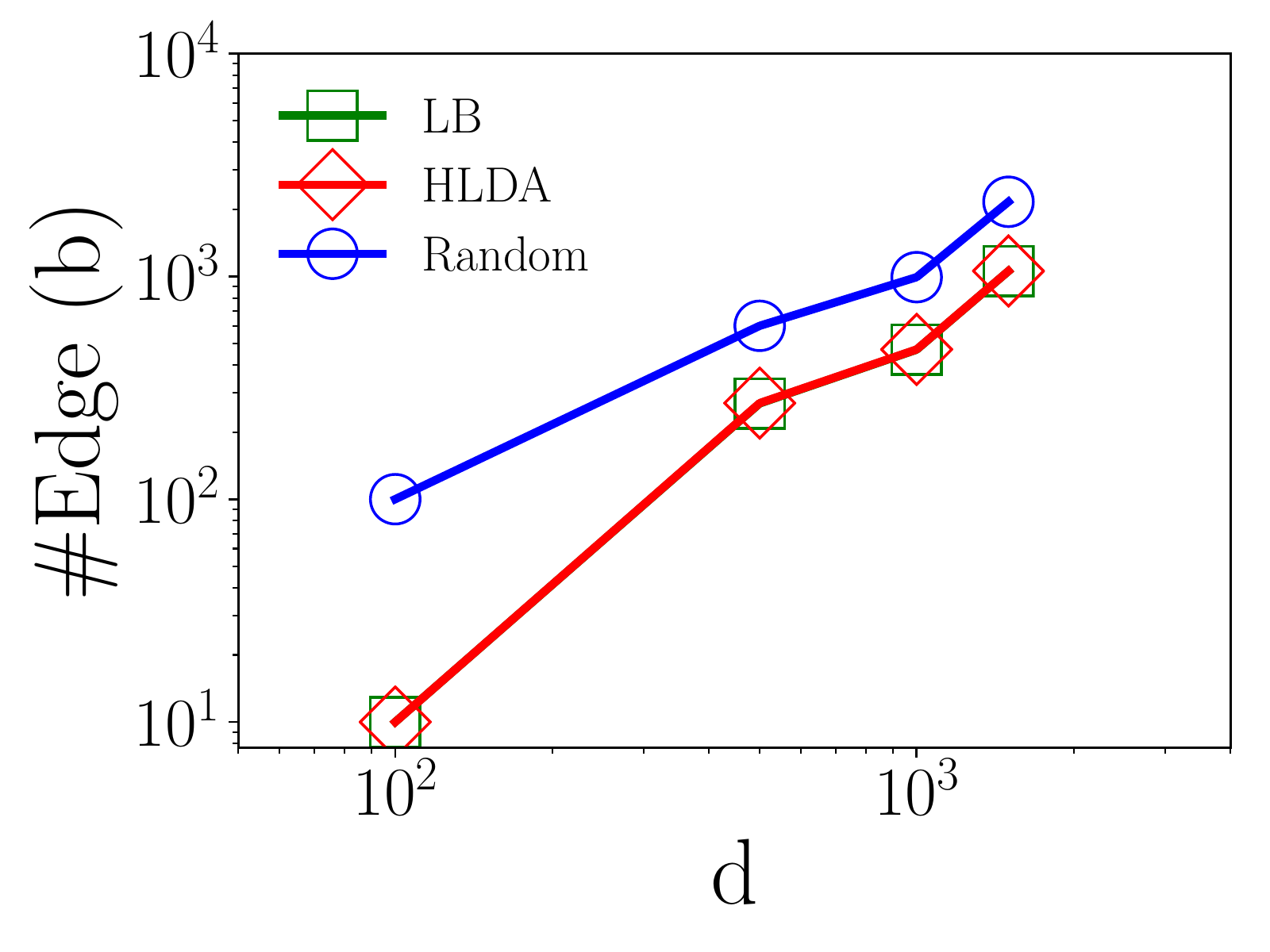}\label{fig:WS_4}} \\
     \subfloat[BA (avg. degree = 10)]{\includegraphics[width=0.22\textwidth]{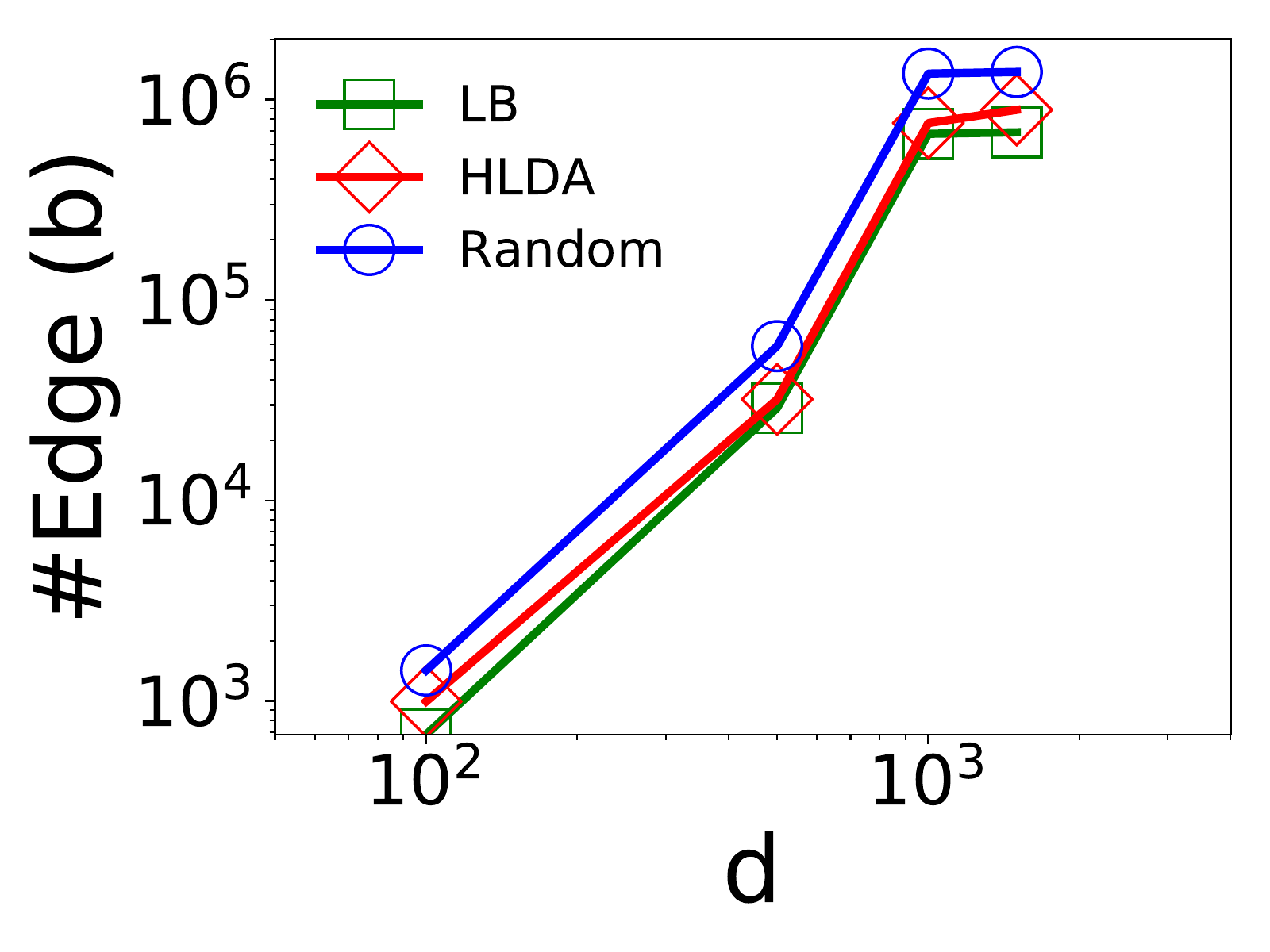}\label{fig:BA_10}} 
    \subfloat[WS (avg. degree = 10)]{\includegraphics[width=0.22\textwidth]{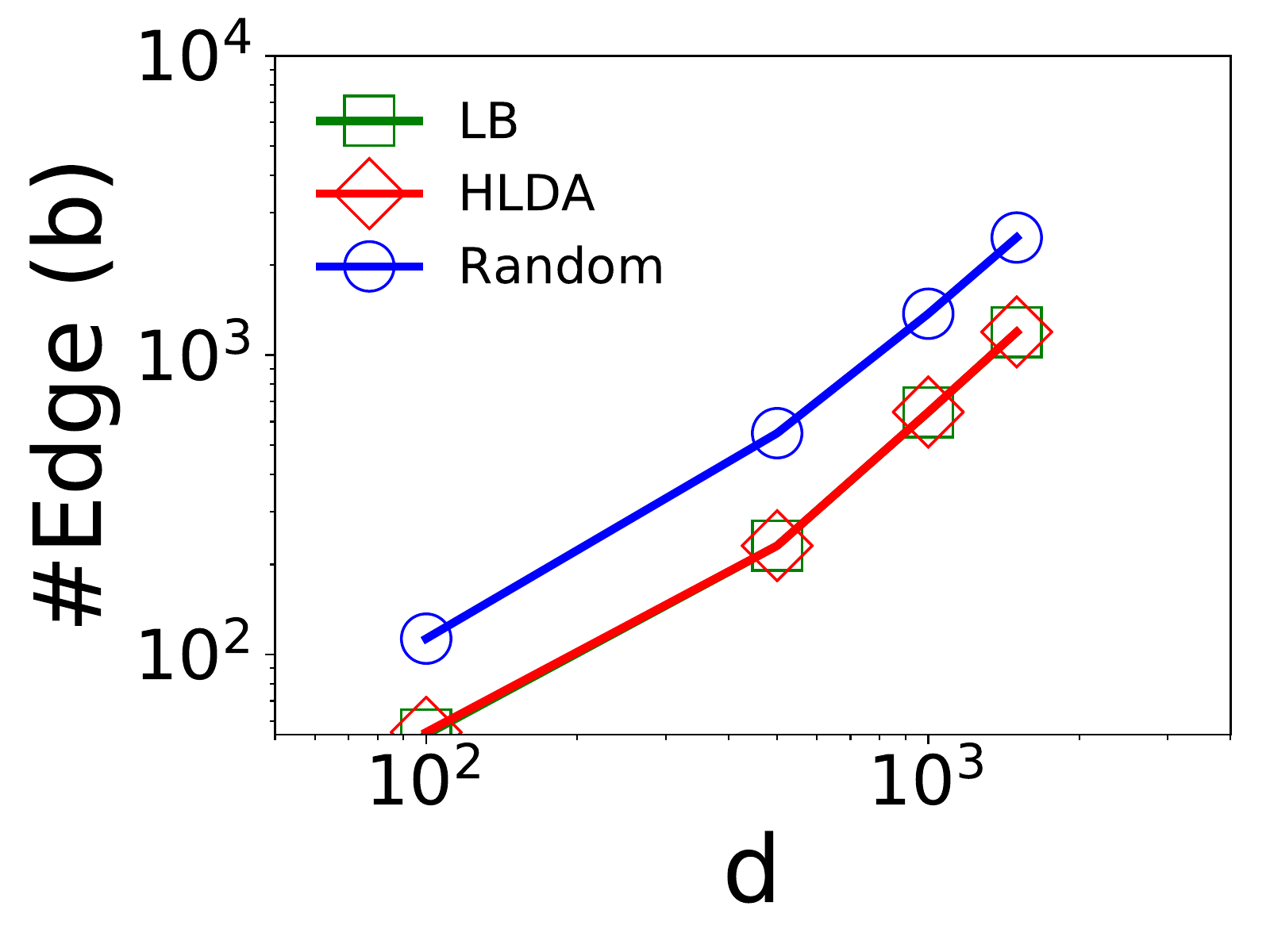}\label{fig:WS_10}}
    \vspace{-1mm}
    \caption{ Number of edges added ($b$) by different algorithms: LB implies a loose lower bound, HLDA is our algorithm that gives $2$-approximation and Random denotes a random edge addition algorithm. Clearly, in both networks while varying edge density (average degree of nodes), the number of edge addition by our algorithm HLDA is almost same as that of LB. \label{fig:baselines}}
    \vspace{-2mm}
\end{figure}

\textbf{Settings:} We generate synthetic network structures from two well-studied models: (a) Barabasi-Albert (BA)~\cite{barabasi1999emergence} and (b) Watts-Strogatz (WS)~\cite{watts1998collective}. While both have ``small-world'' property, WS do not have a scale-free degree distribution. We generate both the datasets of $70$ thousands vertices for three different edge densities: average degree of vertices as $2$, $4$ and $10$.  In the experiments we choose $20$ leaders ($|\LL|=h=20$) randomly from the top $100$ high degree vertices. 

\textbf{Baselines: }We compare our algorithm (HLDA) with two baselines. Our first baseline is the lower bound used in \Cref{thm:2apx_opt_fol_deg} which we call LB. Our second baseline is Random which denotes the number of random edges one needs to add to achieve the goal. The performance metric of the algorithms is the number of edges being added to satisfy the degree centrality requirement for $d$ followers. Hence, the quality is better when the number of edges is lower.  

\textbf{Results: } \Cref{thm:2apx_opt_fol_deg} shows that our algorithm (HLDA) proposed for degree centrality gives a $2$-approximation. However in practice it gives near optimal results. Figure \ref{fig:baselines} shows the results varying $d$ on four datasets. Note that, the axes are in logarithmic scale. In all six datasets, the number of solution edges of HLDA is similar to LB. However, Random cannot produce high quality results. Comparing the datasets (BA and WS), the algorithms (HLDA and LB) need higher number of edges in BA as the chosen leaders (randomly chosen from $100$ top degree nodes) have much higher degree than the followers due to the scale-free degree distribution.   

\begin{figure}[t]
 \vspace{-1mm}
    \centering
    \includegraphics[width=0.55\textwidth]{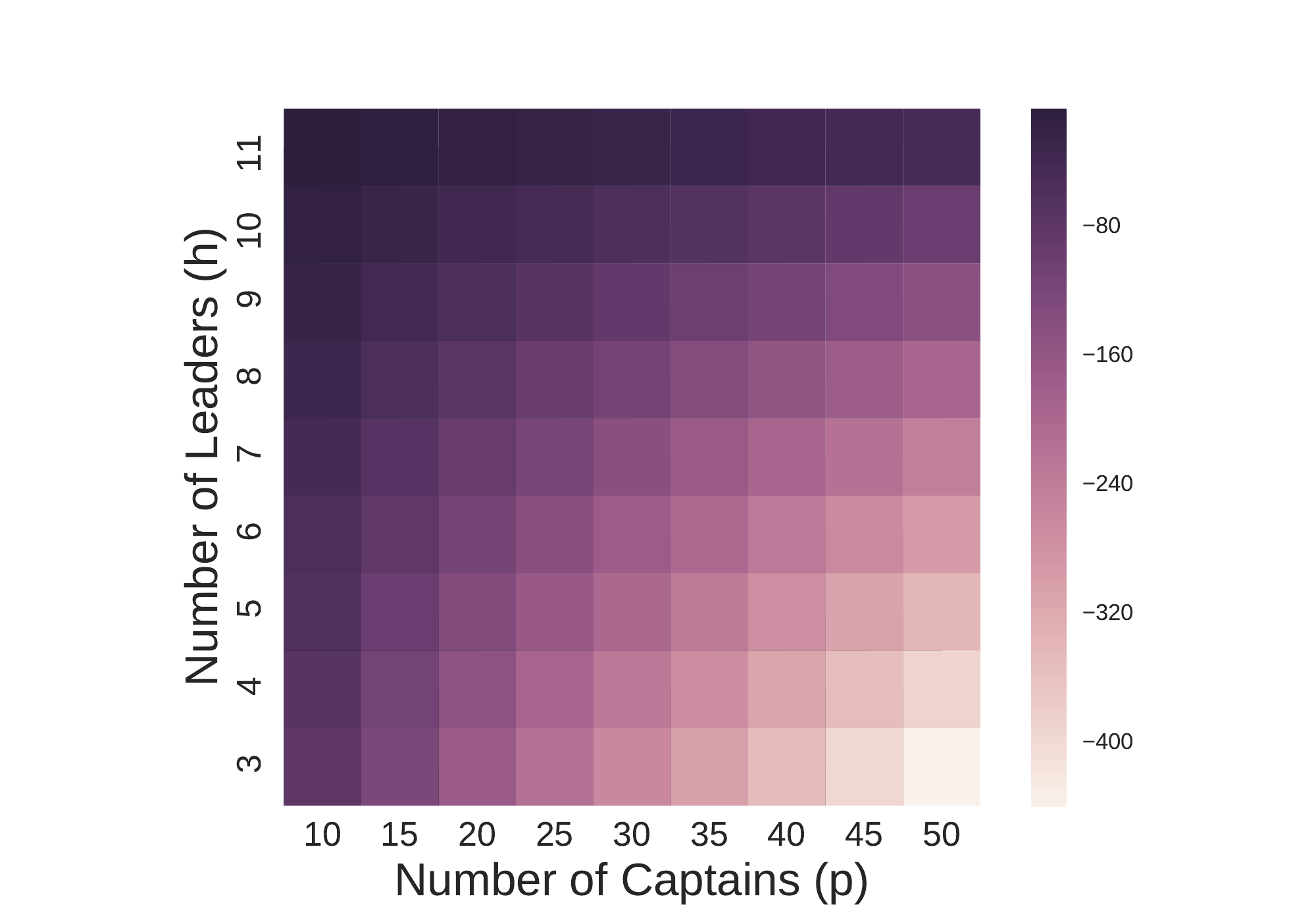}
    \vspace{-1mm}
    \caption{ Summary of the difference in core centralities between  a  leader  and  a captain in a given captain network (with $550$ vertices) by varying number of captains in each group  ($p$) and leaders ($h$).\label{fig:core}} 
\end{figure}

\subsection{Captain Networks and Core Centrality}
In section \ref{sec:captain_nw}, we prove that the core centrality of the leaders can be hidden by the captains in the captain networks \cite{DBLP:conf/atal/WaniekMRW17} (\Cref{thm:core_captain_mult} and Corollary \ref{cor:core_hide_leader}). We empirically evaluate the core centralities of the the leaders and the captains by varying two parameters: the number of captains ($p$) in each group $\CC_i$ and the number of leaders ($h$) for network with multiple leaders.

Figure \ref{fig:core} presents the results for a captain network with $550$ vertices. For every pair of two parameters ($p$ and $h$), we compute the maximum difference in core centrality between any leader and any captain. The intensity of the color signifies that the difference is lesser. Lesser difference also implies lesser disguise for the leaders. Low values of $p$ result into lower disguise for a leader. On the other hand, a high value of $p$ (large number of captains in each group) with low values of $h$ produces the maximum amount of disguise. But for a high value of $p$, if the number of leaders are also very high, i.e., high $h$, the amount of disguise for the leaders decreases.

We summarize our experimental findings as follows.
\begin{itemize}[itemsep=.2cm]
    \item HLDA produces near optimal results in practice, where as, Random cannot produce high quality results. HLDA and LB need more edges to satisfy the degree requirements for $d$ followers in BA due to the scale-free degree distribution. 
    \item A captain network with small number of leaders (low $h$) and a large number of captains in each group (high value of $p$) produces the maximum amount of disguise. 
    \item A low value of $p$, i.e., a small number of captains in each group yields lower disguise for core centrality which is not true for other centralities such as degree, closeness, and betweenness~\cite{DBLP:conf/atal/WaniekMRW17}.
\end{itemize}

\section{Conclusion and Future Work}
We have shown that the \HL problem for the core centrality measure is \NP-hard to approximate with a factor of $(1-\alpha)\ln n$ for any constant $\alpha>0$ for optimizing the number of edges one needs to add even when the core centrality of every leader is only $3$. On the other hand, we prove that the \HL leader problem for degree centrality is polynomial time solvable if the degree of every leader is $\OO(1)$. Moreover, we also provide a $2$ factor polynomial time approximation algorithm for the \HL problem for optimizing the number of edges one needs to add to hide all the leaders. Hence, our results prove that, although classical complexity theoretic framework fails to compare relative difficulty of hiding leaders with respect to various centrality measures~\cite{DBLP:conf/atal/WaniekMRW17}, hiding leaders may be significantly harder for the core centrality than the degree centrality. We complement our $2$ factor approximation algorithm for the \HL problem for degree centrality by proving that if there exists a $(2-\eps)$ factor approximation algorithm for the \HL problem for degree centrality for any constant $0<\eps<1$, then there exists a $\nfrac{\eps}{2}$ factor approximation algorithm for the famous Densest $k$-Subgraph problem which would be considered a major break through. The current best polynomial time algorithm for the Densest $k$-Subgraph problem achieves an approximation ratio of only $\tilde{\OO}(n^{1/4})$~\cite{DBLP:conf/soda/BhaskaraCVGZ12}. We have also empirically evaluated our approximation algorithm which shows that our algorithm produces an optimal solution for most of the cases. We have also shown that the captain networks proposed in \cite{DBLP:conf/atal/WaniekMRW17} can hide the leaders with respect to core centrality.

An important future direction is to explore the average case computational complexity of the \HL problem for popular network centrality measures. Since the results of Waniek et al.~\cite{DBLP:conf/atal/WaniekMRW17} and ours establish that the \HL problem is intractable only in the worst case, it could very well be possible that there exist heuristics that efficiently solve most randomly generated instances. If this is true, then the apparent complexity shield against manipulating various centrality measures will become substantially weak. Another immediate future work is to resolve the computational complexity of the \HL problem for the core centrality measure when the core centrality of every leader is at most $2$.

\subsection*{Acknowledgement}
Dey is funded by DST INSPIRE grant no. 04/2016/001479 and IIT Kharagpur grant no. IIT/SRIC/CS/VTS/2018-19/247.

\bibliographystyle{alpha}
\balance
\bibliography{references}

\end{document}